\providecommand{\tabularnewline}{\\}
\providecommand{\algorithmname}{Algorithm}
\theoremstyle{plain}
\newcommand*{\patchAmsMathEnvironmentForLineno}[1]{%
      \expandafter\let\csname old#1\expandafter\endcsname\csname #1\endcsname
      \expandafter\let\csname oldend#1\expandafter\endcsname\csname end#1\endcsname
      \renewenvironment{#1}%
         {\linenomath\csname old#1\endcsname}%
         {\csname oldend#1\endcsname\endlinenomath}}%
    \newcommand*{\patchBothAmsMathEnvironmentsForLineno}[1]{%
      \patchAmsMathEnvironmentForLineno{#1}%
      \patchAmsMathEnvironmentForLineno{#1*}}%
\def\dispmuskip{\thinmuskip= 3mu plus 0mu minus 2mu \medmuskip=  4mu plus 2mu minus 2mu \thickmuskip=5mu plus 5mu minus 2mu}
\def\textmuskip{\thinmuskip= 0mu                    \medmuskip=  1mu plus 1mu minus 1mu \thickmuskip=2mu plus 3mu minus 1mu}
\def\beq{\dispmuskip\begin{equation}}    \def\eeq{\end{equation}\textmuskip}
\def\beqn{\dispmuskip\begin{displaymath}}\def\eeqn{\end{displaymath}\textmuskip}
\def\bea{\dispmuskip\begin{eqnarray}}    \def\eea{\end{eqnarray}\textmuskip}
\def\bean{\dispmuskip\begin{eqnarray*}}  \def\eean{\end{eqnarray*}\textmuskip}
\newcommand{\wh}{\hat}
\newcommand{\wt}{\widetilde}
\newcommand{\ov}{\overline}
\def\E{\mathbb{E}}                         
\def\s{\sigma}
\def\t{\theta}
\def\N{{\cal N}}
\def\CT{\text{\rm CT}}
\def\Var{\mathbb{V}}
\def\IS{\text{\rm IS}}
\def\Sup{\text{\rm Supp}}
\def\E{{\mathbb E}}                         
\def\V{{\mathbb V}}
\def\s{\sigma}
\def\t{\theta}
\def\N{{\cal N}}
\def\Var{\mathbb{V}}
\def\IS{\text{\rm IS}}
\def\E{\mathbb{E}}
\def\({\left(}
\def\){\right)}
\newtheorem{theorem}{Theorem}
\newtheorem{lemma}{Lemma}
\newtheorem{proposition}{Proposition}
\newtheorem{assumption}{Assumption}
\def \wt{\widetilde}
\def\wh{\widehat}
\providecommand{\theoremname}{Theorem}
\begin{document}

\title{Robustly Estimating the Marginal Likelihood for Cognitive Models via Importance Sampling}
\author{
M.-N. Tran\thanks{The University of Sydney Business School}
\and M. Scharth\footnotemark[1]
\and D. Gunawan\thanks{UNSW Business School, University of New South Wales}
\and R. Kohn\footnotemark[2]
\and S. D. Brown\thanks{School of Psychology, University of Newcastle}
\and G. E. Hawkins\footnotemark[3]
\footnote{The research of Tran, Gunawan, Kohn and Brown
was partially supported by Australian Research Council grant DP180102195. Hawkins was partially supported by Australian Research Council grant DE170100177. Tran and Kohn thank Michael Pitt for some useful discussions that led to the development of Proposition \ref{proposition:marg_lik}.}
}

\maketitle
\begin{abstract}
Recent advances in Markov chain Monte Carlo (MCMC) extend the scope of Bayesian inference to models for which the likelihood function is intractable. Although these developments allow us to estimate model parameters, other basic problems such as estimating the marginal likelihood, a fundamental tool in Bayesian model selection, remain challenging. This is an important scientific limitation because testing psychological hypotheses with hierarchical models has proven difficult with current model selection methods. We propose an efficient method for estimating the marginal likelihood for models where the likelihood is intractable, but can be estimated unbiasedly. It is based on first running a sampling method such as MCMC to obtain samples for the model parameters, and then using these samples to construct the proposal density in an importance sampling (IS) framework with an unbiased estimate of the likelihood. Our method has several attractive properties: it generates an unbiased estimate of the marginal likelihood, it is robust to the quality and target of the sampling method used to form the IS proposals, and it is computationally cheap to estimate the variance of the marginal likelihood estimator. We also obtain the convergence properties of the method and provide guidelines on maximizing computational efficiency. The method is illustrated in two challenging cases involving hierarchical models: identifying the form of individual differences in an applied choice scenario, and evaluating the best parameterization of a cognitive model in a speeded decision making context. Freely available code to implement the methods is provided. Extensions to posterior moment estimation and parallelization are also discussed.
\end{abstract}
Keywords: Bayesian inference; Hierarchical LBA model;   Model selection; Parallel computation; Standard error; Unbiased likelihood estimate.

\section{Introduction}

Many psychologically interesting research questions involve comparing competing theories: Does sleep deprivation cause attentional lapses? Does alcohol impair the speed of information processing or reduce cautiousness, or both? Does the forgetting curve follow a power or exponential function? In many cases, the competing theories can be represented as a set of quantitative models that are applied (``fitted'') to the observed data. We can then estimate a metric that quantifies the degree to which each model accounts for the patterns observed in data balanced against its flexibility. Model flexibility is often defined as the range of data patterns that a model can predict, which includes patterns that were observed as well as patterns that were not observed. Flexibility is an important consideration in model choice as it assesses the degree to which a theory is suitably constrained to predict the observed data and that it does not simply predict many possible patterns that were not observed; the latter form of model is theoretically non-informative as it ``predicts'' almost any pattern that could be observed. Many methods exist that attempt to quantitatively account for the flexibility of models, and hence inform model choice, including likelihood ratio tests, various information criteria (e.g., Akaike, Bayesian and Deviance Information Criteria; AIC, BIC, and DIC, respectively), minimum description length, cross validation and others, with varying degrees of success. The emerging gold standard in quantitative psychology is the marginal likelihood; the commonly cited Bayes factor is the ratio of the marginal likelihoods of two models.

The marginal likelihood of a model is the average likelihood of the model given the data, where the average is taken over the prior distribution of the model parameters. By integrating the likelihood across the prior predictive space of a model, the marginal likelihood has the attractive property that it inherently accounts for the flexibility of the model. This removes the need for post-hoc complexity corrections to goodness of fit measures, as implemented in the ``information criteria'' metrics (e.g., AIC, BIC, DIC), because it provides a guarantee that a model which accounts for the observed patterns in data and few other possible though unobserved patterns will be quantitatively favored over a competing model which accounts for the observed data but also provides the capacity to account for many other patterns that were not observed. This occurs because the latter model would have low likelihood across the (many) regions of the prior space where data were not observed, which lowers the marginal likelihood.

Although theoretically attractive, estimating the marginal likelihood for psychologically interesting models poses a number of practical challenges. 
The likelihood function is the density function of the data with the participant-level parameters (also called random effects) integrated out, when viewed as a function of the group-level parameters. It is usually analytically or computationally intractable for psychological and statistical models because it is an integral that cannot be evaluated. 
This can be the case even in conceptually very simple models. For example, in generalized linear mixed models where random effects (e.g., model parameters for individual participants) account for the dependence between observations from the same individual \citep{Fitzmaurice:2011}, the likelihood is often intractable because it is an integral over the random effects. Recent advances in particle Markov chain Monte Carlo (P-MCMC) methods extend the scope of the applications of Bayesian inference to cases where the likelihood is intractable \citep[see][]{Andrieu:2009,Andrieu:2010,Chopin2013,Tran:2016}. However, some basic problems such as estimating the marginal likelihood and the standard error of this estimator remain challenging, because the marginal likelihood is the density of the data with both the group-level parameters and random effects integrated out. Section \ref{Sec:marginal likelihood in hierarchical models} discusses some existing approaches for addressing this challenge.

Our article proposes an importance sampling (IS) approach for
estimating the marginal likelihood  when the likelihood is intractable but can be estimated unbiasedly.
The method is implemented by first running a sampling scheme such as MCMC, whose draws are used to form an importance distribution for the fixed parameters.  We then estimate the marginal likelihood using an unbiased estimate of the likelihood combined with this importance distribution.
We call this approach Importance Sampling Squared $\left(\textrm{IS}^{2}\right)$, as it is itself an importance sampling (IS) procedure and we can often estimate the likelihood by IS. We claim that it has several advantages over competing methods for estimating the marginal likelihood. The three most important ones are that a) it produces an unbiased estimate of the marginal likelihood; b)  the  method is robust against potential issues with the samples used to form the proposals; and c) it provides an easily computable  and cheap estimate of the variability of the marginal likelihood estimator. Unbiasedness means that
we can approach the true value and easily lower the variance of the estimator by getting more samples. Robustness means that the marginal likelihood estimator is simulation consistent even when the MCMC sampler has not yet converged to the posterior distribution of interest. This can mean that the MCMC may have a slightly perturbed posterior as a target or that the MCMC has the posterior as a target but it may not as yet have converged.
The reason that $\left(\textrm{IS}^{2}\right)$ works robustly is that it uses importance weights to correct for bad importance samples.
Estimating the variability of the estimator directly means that we do not have to replicate the algorithm to obtain reliable estimates of the standard error of the estimator. These observations in turn suggest that it may be desirable to run the initial
MCMC in parallel, and without rigorously requiring each MCMC to converge before forming the proposals for IS. Sections
\ref{Sec:marginal likelihood in hierarchical models}, \ref{sec:Conclusions} and Section~\ref{Appendix:IS2-method} of the appendix expand on these points.

The rest of the paper is organized as follows. Section~\ref{Sec:marginal likelihood in hierarchical models} introduces the general
model under consideration, reviews several
competing approaches for estimating the marginal likelihood, and briefly compares them to our approach. Section~\ref{Sec:IS2algorithm} gives  a detailed
description of the $\textrm{IS}^{2}$ method.  Section~\ref{sec:real applications} illustrates the $\textrm{IS}^{2}$ method in two applications, which
show that the method gives accurate estimates of the marginal likelihood and its standard error in reasonable computing time. Section~\ref{sec:Conclusions} concludes and
discusses future work on speeding up the computation. Sections~\ref{Appendix:IS2-method} to~\ref{Proofstheorem} are five appendices. Section~\ref{Appendix:IS2-method} outlines how  our results on estimating the marginal likelihood extend to estimating
posterior expectations. Section~\ref{Appendix:IS2-effect-on-N} studies the effect on IS of using an estimated likelihood.
 Section~\ref{Appendix:estimation-details} provides further estimation
 details for the two applications in Section~\ref{sec:real applications} and Section~\ref{Appendix:additional-LBA-applications} provides additional
 applications of the method. Finally, Section~\ref{Proofstheorem} contains the proofs of all the results in the paper.

\section{Marginal Likelihood Estimation for Hierarchical Models}\label{Sec:marginal likelihood in hierarchical models}

This section defines the class of hierarchical models for which the density function of the observations (individual participant data),
conditional on the group-level parameters ${\theta}$ and a vector of individual level parameters ${\alpha}$ (random effects) is available, but the likelihood is intractable because it is an integral that cannot be computed over the individual random effects. We also briefly review competing approaches.

Let ${y}_{j}$ be the vector of observations (responses) for the $j^{th}$ subject and define ${y}={y}_{1:S}=\left({y}_{1},...,{y}_{S}\right)$ as the vector of observations for all $S$ subjects.
Let ${\alpha}_{j}\in{\mathbb{R}^{d_{\alpha}}}$, where $\mathbb{R}^d$ denotes $d$-dimensional Euclidean space, be the vector of random effects (subject-level parameters) for subject $j$ and define $p\left({\alpha}_{j}|{\theta}\right)$ as its density.
We define ${\alpha}={\alpha}_{1:S}=\left({\alpha}_{1},...,{\alpha}_{S}\right)$ as the vector of all random effects in the model (i.e., all subject-level parameters).
Let ${\theta}\in{\mathbb{R}^{d_{\theta}}}$ be the vector of unknown group-level parameters and let $p\left({\theta}\right)$ be the prior for ${\theta}$.
Assuming that the $y_j, \alpha_j$, $j=1,...,S$ are independent given $\theta$, the joint density of the random effects and the observations is
\begin{equation}
p\left({y},{\alpha}|{\theta}\right)=\prod_{j=1}^{S}p\left({y}_{j},{\alpha}_{j}|{\theta}\right)=\prod_{j=1}^{S}p\left({y}_{j}|{\alpha}_{j},{\theta}\right)p\left({\alpha}_{j}|{\theta}\right).
\end{equation}
We assume the densities $p\left({y}|{\alpha},{\theta}\right)$, $p\left({\alpha}|{\theta}\right)$, and $p\left({\theta}\right)$ are available analytically or numerically (e.g., by numerical integration).
However, even when these densities are separately available, the likelihood of the hierarchical model is typically analytically intractable because it is an integral over the random effects:
\begin{equation}
p\left({y}|{\theta}\right)=\prod_{j=1}^{S}p\left({y}_{j}|{\theta}\right)\;\;\textrm{with}\;\;p\left({y}_{j}|{\theta}\right)=\int p\left({y}_{j}|{\alpha}_{j},{\theta}\right)p\left({\alpha}_{j}|{\theta}\right)d{\alpha}_{j}.\label{eq:likelihood}
\end{equation}
By Bayes' rule, we can express the joint posterior density of
${\theta}$ and ${\alpha}$ as
\begin{equation}
\pi\left({\theta},{\alpha}\right)\coloneqq p\left({y}|{\alpha},{\theta}\right)p\left({\alpha}|{\theta}\right)p\left({\theta}\right)/p\left({y}\right),
\end{equation}
where
\begin{align} \label{eq: marglik 1}
p\left({y}\right) & =\int\int p\left({y}|{\alpha},{\theta}\right)p\left({\alpha}|{\theta}\right)p\left({\theta}\right)d{\alpha}d{\theta}
\end{align}
is the marginal likelihood.

The main goal of the article is to develop an efficient method for estimating the marginal likelihood, given samples from the posterior distribution over the group-level parameters and individual-level random effects. These samples will have been obtained through generic sampling routines (e.g., JAGS, STAN) or custom \citep[e.g., differential evolution-MCMC (DE-MCMC) in][]{Turner2013}. Even custom samplers still suffer from the problems of high autocorrelation between successive iterates of samples from the posterior, and of slow or uncertain convergence in some problems; for example, when the vector of random effects is high dimensional. These problems can lead to unreliable or biased posterior samples, in which case model selection metrics derived from those posterior samples can also be dramatically wrong. This issue has the potential to influence the reliability of some modern developments in estimating the marginal likelihood for cognitive models, including bridge sampling \citep{gronau2017tutorial,gronau2019simple} and thermodynamic integration \citep{evans2019thermodynamic}. Both of those approaches have the same structure: they begin with posterior samples, and post-process these samples to estimate the marginal likelihood. Both bridge sampling and thermodynamic integration are therefore sensitive to the accuracy of the sampling method -- when the posterior samples are biased in some way, the estimated marginal likelihood will also be biased. An alternative approach is to bypass posterior samples altogether. For example, \citet{evans2018bayes} proposed estimating the marginal likelihood directly by Monte Carlo integration over the prior. This method works well in small problems, but it can be highly inefficient. This ``direct'' method quickly becomes computationally infeasible in high-dimensional problems, or with hierarchically-specified models. 


Our paper proposes a new approach. The basic idea is that we first obtain  samples of the model parameters (${\theta}$; group level), but not the random effects (${\alpha}$; individual participants). These samples are used to form efficient proposal densities for an importance sampling algorithm on the model parameters. The IS$^2$ method therefore does not suffer from the same potential drawbacks as the bridge sampling and thermodynamic integration methods, because the  samples are only used to form the proposals for $\theta$.  This means that unreliable samples of $\theta$ may lead to inefficient proposal densities, which can decrease the efficiency of the IS$^2$ method, but will not lead to bias in the marginal likelihood. In fact, the IS$^2$ method makes it possible to obtain unbiased and simulation consistent estimators of the marginal likelihood without ensuring that the underlying Markov chain(s) have in fact converged,
although they need to produce reasonable estimates of the posterior. Section~\ref{sec:Conclusions} discusses the robustness property of the IS$^2$ estimator and how to use it to speed up the estimation of the marginal likelihood and posterior moments.

\section{Estimating the Marginal Likelihood by $\textrm{IS}^{2}$  \label{Sec:IS2algorithm}}

The likelihood of hierarchical models of the form in Equation \eqref{eq:likelihood} can be analytically intractable, though it can be estimated unbiasedly using importance sampling (other hierarchical models such as state space models may require a particle filter).
Let $\left\{ m_{j}\left({\alpha}_{j}|{\theta},{y}_{j}\right);j=1,...,S\right\} $
be a family of proposal densities that we use to approximate the conditional
posterior densities $\left\{ \pi\left({\alpha}_{j}|{\theta}\right);j=1,...,S\right\}$. Let $g_{\IS}\left({\theta}\right)$ be a proposal density for the group-level parameters, $\theta$.
We note that many different methods can be used to obtain efficient and reliable proposals $g_{\IS}\left({\theta}\right)$ for the group-level parameters and  $m_{j}\left({\alpha}_{j}|{\theta},{y}_{j}\right)$ for the random effects $\alpha_j$,
for each subject $j=1,...,S$. 
A simple frequently used approach takes the proposal as a multivariate Student t distribution with a small number of degrees of freedom, whose location is a mode of the log-likelihood and whose scale matrix is the inverse of minus the Hessian matrix at this mode. Our article constructs the proposal distribution by first running MCMC, even just a short MCMC, to obtain samples from the posterior distribution. We then construct a proposal distribution in the IS$^2$ procedure by fitting a mixture of normal or Student t distributions to these samples, as outlined in Section \ref{sec:real applications} and Appendix \ref{Appendix:estimation-details}.

The density $p\left({y}_{j}|{\theta}\right)$ is estimated unbiasedly by
\begin{equation}\label{eq: importance weights}
\widehat{p}\left({y}_{j}|{\theta}\right)=\frac{1}{N}\sum_{i=1}^{N}w\left({\alpha}_{j}^{\left(i\right)},{\theta}\right),\;\;\textrm{where}
\;\;w\left({\alpha}_{j}^{\left(i\right)},{\theta}\right)=\frac{p\left({y}_{j}|{\alpha}_{j}^{\left(i\right)},{\theta}\right)
p\left({\alpha}_{j}^{\left(i\right)}|{\theta}\right)}{m_{j}\left({\alpha}_{j}^{\left(i\right)}|{\theta},{y}_{j}\right)},
\end{equation}
where ${\alpha}_{j}^{\left(i\right)}\overset{iid}{\sim}m_{j}\left({\alpha}_{j}|{\theta},{y}_{j}\right)$ and $N$ is the number of importance samples used in the likelihood estimate, which we will refer to as the number of ``particles''. We note that for importance sampling to be effective, it is necessary for the support of $m_{j}\left({\alpha}_{j}|{\theta},{y}_{j}\right)$ to contain the support of its corresponding conditional density (i.e., the tails of the proposal should be fatter than the tails of the target). This
condition for importance sampling is usually easily satisfied using the defensive sampling approach in \cite{Hesterberg1995}. Appendix \ref{Appendix:estimation-details} discusses the defensive sampling approach and the proposal $m_{j}\left({\alpha}_{j}|{\theta},{y}_{j}\right)$. Hence,
\begin{equation}\label{eq: likelihood estimator}
\widehat{p}_N\left({y}|{\theta}\right)=\prod_{j=1}^{S}\widehat{p}\left({y}_{j}|{\theta}\right)
\end{equation}
is an unbiased estimator of the likelihood $p\left({y}|{\theta}\right)$.
To choose an optimal number of particles  $N$, discussed in Section~\ref{Subsec:tradeoffIS},
we use the delta method to estimate the variance 
\begin{equation}
\V\left(\log\widehat{p}_N\left({y}|{\theta}\right)\right)=\sum_{j=1}^{S}\V\left(\log\widehat{p}\left({y}_{j}|{\theta}\right)\right)
\label{varianceeqn}
\end{equation}
of $\log\widehat{p}_N\left({y}|{\theta}\right)$. This gives the estimate
\[
\wh \V\left(\log\widehat{p}\left({y}_{j}|{\theta}\right)\right)\approx\frac{\sum_{i=1}^{N}w\left({\alpha}_{j}^{\left(i\right)},{\theta}\right)^{2}}{\left(\sum_{i=1}^{N}w\left({\alpha}_{j}^{\left(i\right)},{\theta}\right)\right)^{2}}-\frac{1}{N}.
\]

We now rewrite Equation~\eqref{eq: marglik 1} to show how it can be estimated by IS. 
Let $u$ consist of all random variables used 
to construct $\wh p_N(y|\theta)$ for a given value of $\theta$, with $p_N(u|\theta)$ the density of $u$. 
In practice, as will be clear shortly, it is unnecessary to know $p_N(u|\theta)$. We also equivalently write 
$\wh p_N(y|\theta)$ as $\wh p_N(y|\theta,u)$. Then $\wh p_N(y|\theta)$ is unbiased
if 
\begin{align*}
p(y|\theta) & = \int_U \wh p_N(y|\theta,u) p_N(u|\theta) du . 
\end{align*}
Let $g_{\IS}(\theta)$ be the proposal for $\theta$ obtained by MCMC or otherwise. Then, the marginal likelihood is given by 
\begin{align} \label{eq: marg lik II}
p(y) & = \int_\Theta  \int_U \wh p_N(y|\theta,u) p_N(u|\theta) p(\theta) du d\theta ,\notag \\
& = \int_\Theta  \int_U   \frac{\wh p_N(y|\theta,u) p_N(u|\theta)p(\theta)}    {g_{\IS}(\theta)p_N(u|\theta)} g_{\IS}(\theta)p_N(u|\theta) du d\theta,\notag \\
& = \int_\Theta \int_U  w(\theta, u)    g_{IS}(\theta)  p_N(u|\theta) du d\theta , 
\quad \text{where}\quad 
w(\theta,u)  =  \frac{\wh p_N(y|\theta,u)p(\theta) }    {  g_{\IS}(\theta)}  .
\end{align}
This leads to the IS$^2$ estimator, 

\begin{align} \label{marg lik est }
\wh p_{\IS^2}(y) & =\frac1M \sum_{i=1}^M \wt w(\theta_i) 
\quad \text{where}\quad 
\widetilde{w}\left({\theta}_{i}\right)=w(\theta_i,u_i),\;\;i=1,...,M . 
\end{align}

Algorithm \ref{alg:Importance-Sampling-Squared} outlines the $\textrm{IS}^{2}$ algorithm for obtaining an estimate of the marginal likelihood. On a practical note, the algorithm is well-suited to efficient parallelization, by processing particles (importance samples) independently in steps (1) and (2).

\begin{algorithm}[h]\label{IS2_Algorithm}
\caption{$\textrm{IS}^{2}$ algorithm for estimating the marginal likelihood \label{alg:Importance-Sampling-Squared}}
\begin{enumerate}
\item Generate ${\theta}_{i}\overset{iid}{\sim}g_{\IS}\left({\theta}\right)$
and compute the likelihood estimate $\widehat{p}_N\left({y}|{\theta}_{i}\right)$, $i=1,...,M$ via Equations~\eqref{eq: importance weights} and \eqref{eq: likelihood estimator}.
\item Compute the weights
\begin{equation}\label{eq: IS2 weights}
\widetilde{w}\left({\theta}_{i}\right)=\frac{\widehat{p}_N\left({y}|{\theta}_{i}\right)p\left({\theta}_{i}\right)}{g_{\IS}\left({\theta_i}\right)},\;\;i=1,...,M
\end{equation}

\item The $\textrm{IS}^{2}$ estimator of the marginal likelihood $p\left({y}\right)$
\begin{equation}
\widehat{p}_{\IS^{2}}\left({y}\right)=\frac{1}{M}\sum_{i=1}^{M}\widetilde{w}\left({\theta}_{i}\right).
\end{equation}

\end{enumerate}
\end{algorithm}

The usual way to estimate the standard error of the marginal likelihood estimator is by replication, i.e.,  by estimating the marginal likelihood estimator a number of times and then computing the standard deviation of these estimates. However, reliable estimation by replication can take a long time. 
An important advantage of our method is that it  is straightforward to estimate the standard error of the marginal likelihood estimator $\widehat{p}_{\IS^{2}}\left({y}\right)$. This affords researchers greater confidence in application of the estimated marginal likelihood, and also permits a simpler investigation of the tradeoff between efficiency and bias. The variance estimator is
\begin{equation}\label{eq:variance of logllh estimator}
\widehat{\V}\left(\widehat{p}_{\IS^{2}}\left({y}\right)\right)=\frac{1}{M}\widehat{\sigma}^2_{p_{\IS^2}(y)},\;\;\text{where}\;\; \widehat{\sigma}^2_{p_{\IS^2}(y)}=\frac{1}{M}\sum_{i=1}^{M}\left(\widetilde{w}\left({\theta}_{i}\right)-\widehat{p}_{\IS^{2}}\left({y}\right)\right)^{2}.
\end{equation}

It is clear that under mild conditions the IS$^2$ estimator is unbiased, simulation consistent and tends to normality as it is an IS estimator. 
Theorem~\ref{lem: robust} formally states these results and Section \ref{Proofstheorem} of the Appendix gives its proof. 



\begin{theorem}\label{lem: robust}
Let $M$ be the number of samples for ${\theta}$ and $N$ the number of particles for estimating
the likelihood. Under the assumptions in Theorem \ref{the:IS1} in Section~\ref{SS: technical results Sec 6} of the Appendix.

(i) $\E\left(\widehat{p}_{\IS^{2}}\left({y}\right)\right)=p\left({y}\right)$
and $\widehat{p}_{\IS^{2}}\left({y}\right)\overset{a.s.}{\rightarrow}p\left({y}\right)$
as $M\rightarrow\infty$ for any $N\geq1$.


(ii) $\sqrt{M}\big(\widehat{p}_{\IS^{2}}\left({y}\right)-p\left({y}\right)\big)\overset{d}{\rightarrow}N\left(0,{\sigma}^2_{p_{\IS^2}(y)}\right)$
and $\widehat{\sigma}^2_{p_{\IS^2}(y)}\overset{a.s.}{\rightarrow}{\sigma}^2_{p_{\IS^2}(y)}=\V(\widetilde{w}\left({\theta}\right))$
as $M\rightarrow\infty$ for any $N\geq1$.
\end{theorem}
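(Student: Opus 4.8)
The plan is to observe that the $\IS^2$ estimator is nothing more than an ordinary importance sampling average of iid summands over an enlarged probability space, so that both parts reduce to the strong law of large numbers (SLLN) and the Lindeberg--L\'{e}vy central limit theorem (CLT). The structural fact to establish first is the independence claim: because $\theta_i \overset{iid}{\sim} g_{\IS}(\theta)$ and, given each $\theta_i$, the auxiliary variables $u_i$ are drawn independently across $i$ from $p_N(u|\theta_i)$, the pairs $(\theta_i, u_i)$ are iid draws from the joint density $g_{\IS}(\theta)\, p_N(u|\theta)$ on $\Theta \times U$. Hence the weights $\widetilde{w}(\theta_i) = w(\theta_i, u_i)$ are iid, and the limit theory becomes one-dimensional once the relevant moments are shown to be finite under the assumptions imported from Theorem \ref{the:IS1}.

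For part (i), I would first record the conditional unbiasedness $\int_U \widehat{p}_N(y|\theta,u)\, p_N(u|\theta)\, du = p(y|\theta)$. This follows from the product form \eqref{eq: likelihood estimator} together with the unbiasedness of each factor in \eqref{eq: importance weights} — since $\E[w(\alpha_j,\theta)] = \int p(y_j|\alpha_j,\theta) p(\alpha_j|\theta)\, d\alpha_j = p(y_j|\theta)$ — and the independence of the particle draws across the $S$ subjects, which lets the expectation of the product factor into the product of the expectations. Taking the expectation of a single weight under the joint density $g_{\IS}(\theta)\, p_N(u|\theta)$ and applying Fubini's theorem then reproduces the chain of equalities in \eqref{eq: marg lik II}, yielding $\E[\widetilde{w}(\theta_i)] = p(y)$; unbiasedness of $\widehat{p}_{\IS^2}(y)$ is then immediate by linearity of expectation. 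The almost sure convergence $\widehat{p}_{\IS^2}(y) \overset{a.s.}{\rightarrow} p(y)$ is the SLLN applied to the iid summands $\widetilde{w}(\theta_i)$, whose common mean is $p(y)$.

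For part (ii), the asymptotic normality is the classical CLT for the iid weights $\widetilde{w}(\theta_i)$, which needs only that $\sigma^2_{p_{\IS^2}(y)} = \V(\widetilde{w}(\theta)) < \infty$; this finite-variance condition is exactly what the assumptions of Theorem \ref{the:IS1} provide, and it is where the requirement that the proposals $m_j$ and $g_{\IS}$ have heavier tails than their targets does its work. For consistency of the variance estimator I would decompose $\widehat{\sigma}^2_{p_{\IS^2}(y)} = \frac{1}{M}\sum_{i=1}^{M}\widetilde{w}(\theta_i)^2 - \widehat{p}_{\IS^2}(y)^2$, apply the SLLN to the iid terms $\widetilde{w}(\theta_i)^2$ (integrable because the second moment is finite), and combine this with the already-established convergence $\widehat{p}_{\IS^2}(y) \overset{a.s.}{\rightarrow} p(y)$ via the continuous mapping theorem to obtain $\widehat{\sigma}^2_{p_{\IS^2}(y)} \overset{a.s.}{\rightarrow} \E[\widetilde{w}(\theta)^2] - p(y)^2 = \V(\widetilde{w}(\theta))$.

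The one place where genuine work hides, rather than in the limit theorems, is the verification of the moment conditions: because the likelihood factor $\widehat{p}_N(y|\theta)$ is itself a random estimate, I must ensure that the product of this estimation noise with the importance ratio $p(\theta)/g_{\IS}(\theta)$ has a finite second moment, so that the ``outer'' variance in $\theta$ and the ``inner'' estimation variance in $u$ jointly stay bounded. I expect this to be precisely the hypothesis supplied by Theorem \ref{the:IS1}, so in the body of the proof I would simply invoke it and leave the remainder as routine bookkeeping with the SLLN, the CLT, and Fubini's theorem on the augmented space $\Theta \times U$.
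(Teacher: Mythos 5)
Your proposal is correct and follows essentially the same route as the paper: the paper's own proof simply notes that unbiasedness holds because this is importance sampling on the augmented space and that the assumptions of Theorem \ref{the:IS1} make the weights $\widetilde{w}(\theta_i)$ i.i.d.\ with finite second moment, after which (i) and (ii) follow from the SLLN and CLT. Your write-up is a faithful (and more detailed) expansion of exactly that argument, including the Fubini step on $\Theta\times U$ and the variance-estimator decomposition.
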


There is a trade-off between the statistical precision of the IS$^2$ estimator (in terms of the variance 
from Equation~\eqref{eq:variance of logllh estimator}) and the computational cost of obtaining the estimate. A large number of particles $N$ gives a more accurate estimate of the likelihood with greater computational cost, while a small $N$ results in a likelihood estimator $\widehat p_N(y|\theta)$ that is cheap to evaluate but has larger variance. Section~\ref{Appendix:IS2-effect-on-N} of the Appendix provides theoretical and practical guidelines
of this tradeoff. Proposition \ref{proposition:marg_lik} in 
Section \ref{SS: optimal N marg likel} shows that the optimal $N$ in the tradeoff between accuracy and computational cost is such that the variance of the log-likelihood estimator $\V(\log\widehat p(y|\theta))$ is approximately 1.

\section{Applying the IS$^2$ Method to Data}\label{sec:real applications}

This section describes two applications of the IS$^2$ method to real data. The applications are chosen from quite
different domains -- the first focuses on choice in a health context, the second focuses on an experimental
 manipulation in cognition. We chose these applications as they are challenging cases for estimating marginal likelihoods,
 and thus represent good test cases for illustrating the power of our method. They also emphasize that the IS$^2$ method is not restricted to a particular domain of application or model class. It is a general method for estimating the marginal likelihood and the standard error of that estimate and can be applied in a variety of other contexts where quantitative models are estimated from data in a hierarchical framework including the study of attention, decision making, categorization, and memory. 

\subsection{Individual differences in health-related decisions}\label{sec:gmnlmodel}

The first application explores individual differences in preferences for health decision making, by examining choices for hypothetical appointments with a health practitioner. A common approach to understanding individual differences in applied contexts is to assume that different individuals have different \emph{utility} parameters, where the utility represents the subjective value placed on the \emph{attributes} that describe the products and services on offer. In this sense the utility is a quantitative estimate of what people like and what they dislike. A common way to model individual differences in choices is to then assume individual differences in \emph{utility} parameters, commonly described as ``taste heterogeneity''. The mixed logit model is the most common framework for modeling individual differences in utilities \citep{train2009discrete}, which assumes utility parameters are multivariate normal distributed in the population. Nevertheless, more recent research suggests that all individuals might hold the same (or have sufficiently similar) utilities except that those utilities can be scaled up or down for different individuals \citep{fklw2010}. This approach assumes that the ratio of the utilities for pairs of items is constant across participants though the absolute difference between those utilities can differ. This ``scale heterogeneity'' means that some people make more random choices than others, though their latent preferences, in the limit of very many trials, are similar.

The key model comparison question in this context is whether the choices observed in a population of participants are more consistent with individual differences in \emph{latent preferences} (taste) or individual differences in \emph{choice consistency} (scale). We tested this idea for the data reported in \cite{fklw2010} in a study where $S=79$ women made choices about $T=32$ choice scenarios for a pap smear test, where each scenario was defined by a combination of the values of the $K=5$ choice attributes listed in Table \ref{tab:choices} (not all unique combinations were presented). 
For example, one choice scenario might be that the pap smear would be conducted by a female doctor who is unknown to the patient, the test is not due though the doctor recommends it, with a cost of \$20. The participant is then asked whether they would take this test or not.

\begin{table}[h]
\caption{Choice attributes for the pap smear data set and their associated parameters.}\label{tab:choices}
\begin{center}
\begin{tabular}{lll}
\hline\hline
Choice attributes & Values & Parameters \\
\hline
Alternative specific constant for ``take the test'' & 1 & $\beta_0, \sigma_0$ \\
Whether patient knows doctor &  0 (no), 1 (yes) & $\beta_1, \sigma_1$ \\
Whether doctor is male &  0 (no), 1 (yes) & $\beta_2, \sigma_2$ \\
Whether test is due &  0 (no), 1 (yes) & $\beta_3, \sigma_3$ \\
Whether doctor recommends test &  0 (no), 1 (yes) & $\beta_4, \sigma_4$ \\
Test cost & \{0, 10, 20, 30\} & $\beta_5$ \\
\hline\hline
\end{tabular}
\end{center}
\end{table}

Our goal is to test whether choices in the pap smear test data provide greater support for the presence of individual differences in latent preferences, or the presence of individual differences in latent preferences \emph{and} choice consistency. For this goal, we estimated different models which instantiated the different hypotheses. We then used the IS$^2$ method to estimate the marginal likelihoods of those models, and used those estimates for model selection via Bayes factors.

\subsubsection{Behavioural model}

We analyse the choice data with the generalized multinomial logit (GMNL) model \citep{fklw2010}, a generalization of the mixed logit model, that accounts for individual differences in both latent preferences and choice consistency. We let the observed choice for participant $j$ in scenario $i$ be $y_{ij}=1$ if the participant chooses to take the test and $y_{ij}=0$ otherwise.
The general form of the GMNL for the probability of individual $j$ on trial $i$ selecting the test is
\begin{equation}\label{eq:choiceprob}
p(y_{ij}=1|X_{ij},\beta^{(j)})=\frac{\exp(\beta_{0j}+\sum_{k=1}^{K}\beta_{kj}x_{ijk})}{1+\exp(\beta_{0j}+\sum_{k=1}^{K}\beta_{kj}x_{ijk})},
\end{equation}
where $\beta^{(j)}=(\beta_{0j},\beta_{1j},\ldots,\beta_{Kj})'$ and $X_{ij}=(x_{ij1},\ldots,x_{ijK})'$ are the vectors of utility weights and choice attributes respectively. As is standard in most consumer decision research, the GMNL model assumes that the attributes comprising an option (i.e., a hypothetical appointment) each have an associated utility, and the utility of an option as a whole is the sum of the component attribute utilities. The summed utility probabilistically generates a choice via the Luce choice rule, as shown in Equation \eqref{eq:choiceprob}.
The model also allows for an {\it a priori} bias toward accepting or rejecting the test at each trial independent of the feature values of the test, known as an alternative specific constant; $\beta_{0j}=\beta_{0}+\eta_{0j}$, with $\eta_{0j}\sim \N(0,\sigma_{0}^2)$.

The utilities for individual participants are modeled as
\beqn
\beta_{jk}=\lambda_j\beta_k+\gamma\eta_{jk}+(1-\gamma)\lambda_j\eta_{jk},\qquad \lambda_j=\exp(-\delta^2/2+\delta\zeta_{j}), \qquad k=1,\ldots,K,
\eeqn
with $\eta_{jk}\sim \N(0,\sigma_k^2)$ and  $\zeta_{j}\sim \N(0,1)$.
The expected value of the scaling coefficients $\lambda_j$ is one, implying that $\E(\beta_{jk})=\beta_k$. We set the across-participant variance parameter for the cost attribute to zero ($\sigma_5^2=0$) as we found no evidence of heterogeneity across individuals for this attribute, beyond the scaling effect. 
As the sum of the probabilities over the possible choice options in Equation \eqref{eq:choiceprob} is 1 (i.e., take the test, or not), to make the model identified we set the coefficients $\beta$ with respect to $y_{ij}=0$ (not taking the test) to zero.

The GMNL model accounts for individual differences in latent preferences (taste) and choice consistency (scale) through the parameters $\delta$ and $\gamma$, respectively. When $\delta=0$ (so that $\lambda_j=1$ for all individuals), the GMNL model reduces to the mixed logit model. The mixed logit model captures heterogeneity in consumer preferences by allowing individuals to weight the choice attributes differently (i.e., individual differences in attribute utilities). By introducing taste heterogeneity, the mixed logit model avoids the restrictive independence of irrelevant alternatives property of the standard multinomial logit model.

The GMNL model additionally allows for differing choice consistency across individuals, known as scale heterogeneity, through the random variable $\lambda_j$. This variable changes all attribute weights simultaneously, allowing the model to predict choices to be more random for some consumers than others, such that smaller values of $\delta$ (so that $\lambda_j\rightarrow 1$) indicate differences in latent preferences where larger values of $\delta$ (so that $\lambda_j\rightarrow 0$) indicate differences in choice consistency.  The $\gamma$ parameter weights the specification between two alternative ways of introducing heterogeneity into the model.

The parameter vector is $\theta=(\beta_{0},\beta_1,\ldots,\beta_K,\sigma_{0}^2,\sigma_1^2,\ldots,\sigma_K^2,\delta^2,\gamma)'$, while the vector of random effects for each individual is $\alpha_{j}=(\eta_{0j},\ldots,\eta_{jK},\lambda_j)$.
The likelihood is therefore
\begin{align}
\label{eq:gmnllik}
p(y|\theta)=\prod_{j=1}^{S}p(y_j|\theta)=\prod_{j=1}^{S}\left[\int  \(\prod_{i=1}^{T} p(y_{ij}|\alpha_{j},\theta)\) p(\alpha_j|\theta) d\alpha_j\right],
\end{align}
where $y_{ij}$ is the observed choice, $y=(y_{11},\ldots,y_{1T},\ldots,y_{j1},\ldots,y_{ST})'$ and $p(y_{ij}|\alpha_{j},\theta)$ is given by the choice probability in Equation \eqref{eq:choiceprob}. We use diffuse priors specified as
$\beta_{0}\sim \N(0,100)$, $\sigma_{0} \propto (1+\sigma_{0}^2)^{-1}$,
$\beta_k\sim \N(0,100)$, $\sigma_k \propto (1+\sigma_{k}^2)^{-1}$, for $k=1,\ldots,K$,
$\delta\propto (1+\delta/0.2)^{-1}$,
and $\gamma \sim \textrm{U}(0,1)$.
The standard deviation parameters have half-Cauchy priors \citep{Gelman2006}.
Section \ref{Appendix:estimation-details} of the Appendix provides complete details of the estimation procedure.

\subsubsection{Results}

We estimated the posterior distribution using $M=50,000$ importance samples for the parameters, and estimated the Monte Carlo standard errors by bootstrapping the importance samples because we found that bootstrapping gave the most stable results of the estimator. 
Appendix \ref{Appendix:estimation-details} gives implementation details.
The log marginal likelihood (with standard error) for the mixed logit and GMNL models were $-981.24$ (.003) and $-978.27$ (.012), respectively. The Monte Carlo standard errors are very small, indicating the IS$^2$ method is highly efficient. The estimates of the log marginal likelihood allow us to calculate a Bayes factor of approximately 20 for the GMNL model over the mixed logit model; $BF_{\text{GMNL:mixed}}$ = $\exp(-978.27 -(-981.24))\approx 19.5$.

The larger marginal likelihood for the GMNL model than the mixed logit model provides some evidence for the presence of scale heterogeneity for this data set. Corroborating evidence comes from the parameter estimates. The posterior mean for $\gamma$ was .17 (90\% credible interval [.014--.452]), where the weight for scale heterogeneity in the GMNL model was $1-\gamma$. This means that assuming individuals only differed in their latent preferences (taste heterogeneity) did not sufficiently capture the trends in data. Some participants also made more variable choices than others (scale heterogeneity).

\subsection{The speed-accuracy tradeoff in perceptual decisions} \label{subsec:Forstmann}

Our second application explores the cognitive processes involved in perceptual decisions. We examine decision making in the context of the well-studied speed-accuracy tradeoff: the finding that electing to make faster decisions decreases decision accuracy and, conversely, that electing to increase decision accuracy causes decisions to become slower \citep[see][for a review]{heitz2014speed}. The speed-accuracy tradeoff is typically attributed to
changes in caution -- the quantity of evidence required to trigger a response (i.e., the response threshold). Here, we apply the $\textrm{IS}^{2}$ method to a class of decision making models in the Linear Ballistic Accumulator \citep[LBA;][]{Brown2008} framework to confirm that model choice via the marginal likelihood is consistent with existing methods of model choice in the literature.

We draw upon data reported by \citet{Forstmann2008} coming from an experiment in which $S=19$ participants were required to make difficult motion discrimination decisions under varying degrees of time pressure. The participants made repeated decisions about whether a cloud of semi-randomly moving dots appeared to move to the left or to the right of a display. Participants made these decisions under three conditions that differed in their task instructions: they were asked to respond as accurately as possible (condition 1: ``accuracy emphasis''), at their own pace (condition 2: ``neutral emphasis''), or as quickly as possible (condition 3: ``speed emphasis''). The three conditions were randomized across trials. A visual cue described the decision emphasis on a trial-by-trial basis prior to stimulus onset. Each subject made 280 decisions in each condition for a total of $T=840$ trials. See \citet{Forstmann2008} for all remaining details.

Freely available code applying the IS$^2$ method to one of the hierarchical LBA models described here is available at \url{osf.io/xv59c}. Section \ref{Appendix:additional-LBA-applications} of the Appendix also provides two additional applications of the IS$^2$ method to the hierarchical LBA: one application to the speed-accuracy tradeoff in lexical decisions, and a second application to response bias in lexical decisions. Code for these two examples is also available at the same repository.

\subsubsection{Behavioural model}

We analyse the choice and response time data with a hierarchical LBA model. In a typical perceptual decision making experiment, the $i^{th}$ observation
for the $j^{th}$ participant contains two pieces
of information. The first is the response choice, denoted by
$RE_{ij}\in\left\{ 1,...,C\right\} $, where $C$ is the number of
response alternatives. The second is the response time
(RT), denoted by $RT_{ij}\in\left(0,\infty\right)$. \citet{Brown2008}
derived the joint density and cumulative distribution function of
the finishing time of an LBA accumulator over response choice and
response time. The finishing time distribution for each of the LBA
accumulators is determined by five parameters: the mean and standard
deviation  of the drift rate distribution $\left(v,s\right)$, the
threshold parameter $b$, the width of the start point distribution $A$,
and the non-decision time $\tau$. It is common to set the standard
deviation $s$ of the drift rate distribution to one, i.e. $s=1$
\citep[see also][]{DonkinBrownHeathcote2009}. The probability density of the $c^{th}$
accumulator reaching the threshold at time $t$, and the other accumulators
not reaching the threshold prior to time $t$, is
\[
\textrm{LBA}\left({RE_{ij}}=c,RT_{ij}=t|b,A,v,s,\tau\right)=f_{c}\left(t\right)\prod_{k\neq c}\left(1-F_{k}\left(t\right)\right),
\]
where $f_c$ and $F_c$ are the density and distribution functions for the finishing times of a single LBA accumulator \citep[for details see][]{Brown2008,terry2015generalising}. In a two-choice experiment where response bias is not expected, it is convenient to define the means of the drift rate distributions as ${v}=\left(v^{\left(e\right)},v^{\left(c\right)}\right)$,
where $v^{\left(e\right)}$ is the mean of the drift rate for the accumulator
corresponding to the incorrect response and $v^{\left(c\right)}$ is the mean of the
drift rate for the accumulator corresponding to the correct response.
Together, these assumptions imply that each subject $j$, $j= 1,...,S$,
has the vector of random effects
\[
\left\{ b_{j},A_j,v^{\left(e\right)}_j,v^{\left(c\right)}_j,\tau_j\right\}.
\]
With the usual assumption of independence, the conditional density
of all the observations is
\[
p\left(RE,RT|{b},{A},{\tau},{v}\right)=\prod_{j=1}^{S}\prod_{i=1}^{T}\textrm{LBA}\left(RE_{ij},RT_{ij}|b_{j},A_{j},v^{\left(e\right)}_j,v^{\left(c\right)}_j,\tau_{j}\right).
\]
We follow the notation and assumptions in \citet{Gunawan2019}. For each subject $j=1,...,S$, let the vector of random effects
\[
{\alpha}_{j}=\left(\alpha_{b_{j}},\alpha_{A_{j}},\alpha_{v_{j}^{\left(e\right)}},\alpha_{v_{j}^{\left(c\right)}},\alpha_{\tau_{j}}\right),
\]
where $\alpha_{b_{j}}=\log\left(b_{j}\right)$,
$\alpha_{A_{j}}=\log\left(A_{j}\right)$, $\alpha_{v_{j}^{\left(e\right)}}=\log\left(v_{j}^{\left(e\right)}\right)$, $\alpha_{v_{j}^{\left(c\right)}}=\log\left(v_{j}^{\left(c\right)}\right)$
and $\alpha_{\tau_{j}}=\log\left(\tau_{j}\right)$. We take the following prior densities (as in \citealp{Gunawan2019}): ${\alpha}_{j}$ is $N\left({\alpha}_{j};{\mu}_{\alpha},{\Sigma}_{\alpha}\right)$, ${\mu}_{\alpha}\sim N\left(0,I_{d_{\alpha}}\right)$ where $d_{\alpha}$ is the dimension of ${\alpha}_{j}$, and the hierarchical prior for ${\Sigma}_{\alpha}$ is

\begin{eqnarray}
{\Sigma}_{\alpha}|a_{1},...,a_{d_{\alpha}} & \sim & IW\left(v_{\alpha}+d_{\alpha}-1,\ 2v_{\alpha}\textrm{diag}\left(1/a_{1},...,1/a_{d_{\alpha}}\right)\right),\\
a_{1},...,a_{d_{\alpha}} & \sim & IG\left(\frac{1}{2}, \frac{1}{\mathcal{A}_{d}^{2}}\right),\ d=1,...,d_{\alpha},
\end{eqnarray}
where $v_{\alpha}$, $\mathcal{A}_{1},...,\mathcal{A}_{d_{\alpha}}$
are positive scalars and $\textrm{diag}\left(1/a_{1},...,1/a_{d_{\alpha}}\right)$
is the diagonal matrix with diagonal elements $1/a_{1},...,1/a_{d_{\alpha}}$.\footnote{The notation $IW\left(a,A\right)$ means an inverse Wishart distribution with degrees of freedom $a$ and scale matrix $A$ and the notation $IG\left(a,b\right)$ means an inverse Gamma distribution with scale
parameter $a$ and shape parameter $b$.} This is the marginally non-informative prior of \citet{Huang2013}. Following \citet{Gunawan2019}, we set $v_{\alpha}=2$ and $\mathcal{A}_{d}=1$
for all $d$. These settings lead to half-$t$ marginal prior distributions for the standard deviations of $\Sigma_\alpha$ and uniform marginal prior distributions for the off-diagonal correlations implied by $\Sigma_\alpha$.

We tested whether the changes in task instructions across the three experimental conditions caused changes in observed decision behaviour that is consistent with changes in decision caution. We compared four LBA models. For consistency with the notation above, we refer to instructions that emphasize accuracy, neutral or speedy decisions as $a$, $n$ and $s$, respectively, and the correct and error accumulators as $c$ and $e$, respectively.

\begin{description}
\item[Model I] assumes no differences in parameters across the three conditions, i.e., a null model. This model corresponds to the psychological assumption that performance is not influenced by the instructions given to participants. The vector of random effects for subject $j$ is
\[
\alpha_{j}=\left(\alpha_{b_{j}},\alpha_{A_{j}},\alpha_{v_{j}^{\left(e\right)}},\alpha_{v_{j}^{\left(c\right)}},\alpha_{\tau_{j}}\right).
\]
\item[Model II] assumes there are two response threshold parameters. One threshold parameter defines decision-making in the accuracy- and neutral-emphasis conditions, and the second threshold parameter in the speed-emphasis condition. This makes the simplifying assumption that performance in two of the three conditions is so similar as to be effectively identically distributed,
\[
\alpha_{j}=\left(\alpha_{b_{j}^{\left(a/n\right)}},\alpha_{b_{j}^{\left(s\right)}},\alpha_{A_{j}},\alpha_{v_{j}^{\left(e\right)}},\alpha_{v_{j}^{\left(c\right)}},\alpha_{\tau_{j}}\right).
\]

\item[Model III] assumes there are three response threshold parameters, one for each condition. This model assumes that the participant responds to each condition with a different level of caution, though otherwise with identically distributed parameters,
\[
\alpha_{j}=\left(\alpha_{b_{j}^{\left(a\right)}},\alpha_{b_{j}^{\left(n\right)}},\alpha_{b_{j}^{\left(s\right)}},\alpha_{A_{j}},\alpha_{v_{j}^{\left(e\right)}},\alpha_{v_{j}^{\left(c\right)}},\alpha_{\tau_{j}}\right).
\]

\item[Model IV] assumes there are separate response threshold and non-decision time parameters for each condition. This model assumes that instructions to emphasise speedy, neutral or accurate decisions influence not only cautiousness but also the time required to perceptually encode the stimulus and produce a motor response (non-decision time),
\[
\alpha_{j}=\left(\alpha_{b_{j}^{\left(a\right)}},\alpha_{b_{j}^{\left(n\right)}},\alpha_{b_{j}^{\left(s\right)}},\alpha_{A_{j}},\alpha_{v_{j}^{\left(e\right)}},\alpha_{v_{j}^{\left(c\right)}},\alpha_{\tau_{j}^{\left(a\right)}},\alpha_{\tau_{j}^{\left(n\right)}},\alpha_{\tau_{j}^{\left(s\right)}}\right).
\]
\end{description}
As in the first application, Section \ref{Appendix:estimation-details} of the Appendix provides complete details of the estimation procedure.

\subsubsection{Results}

Once we obtained samples from the posterior, we ran the $\IS^2$ algorithm with $M=10,000$ draws to estimate the log of the marginal likelihood, $\log\ \widehat{p}(y)$, and adaptively set the number of particles $N$ that were used to estimate the likelihood such that the variance of the log-likelihood estimates did not exceed $1$; recall that $\sigma_{opt}^{2}=1$ is the optimal variance of the log-likelihood (cf. Section \ref{Sec:IS2algorithm}). We initiated this adaptive procedure by starting with $N=250$ particles and computed the variance of the log-likelihood given in Equation \eqref{varianceeqn}. If the variance was greater than 1, we increased the number of particles. We again estimated the Monte Carlo standard error of the log of the marginal likelihood estimates by bootstrapping the importance samples because we found that bootstrapping gave the most stable results.

The log marginal likelihoods (with standard errors) for Models I, II, III and IV are, respectively, 5204.17 (0.11), 7352.75 (0.06), 7453.73 (0.10) and 7521.44 (0.17), with a total computation time of around 80 minutes. As in the first application, the Monte Carlo standard errors are very small, suggesting that IS$^2$ is highly efficient. The marginal likelihoods favor Model IV -- allowing a separate response threshold and a separate non-decision time for each instruction condition. This outcome is partially consistent with \citet{Forstmann2008}, but also suggests that the non-decision time -- the combined time required to perceptually encode a stimulus and produce a motor response -- changes with instruction condition. 

\section{General Discussion and Future Work\label{sec:Conclusions}}

Model comparison is the means by which competing theories are rigorously evaluated and compared, which is fundamental to advancing psychological science. Many quantitative approaches to model comparison in psychology have struggled to appropriately account for model flexibility -- the range of data patterns that a model can predict. For example, commonly used methods such as AIC and BIC measure the flexibility of a model simply by counting the number of freely estimated model parameters. This is problematic because not all parameters are equal, in terms of complexity, and models with the same numbers of parameters can be quite different in complexity. In some cases, adding extra parameters to a model can even \emph{decrease} the complexity, for example when a hierarchical distribution structure is added to constrain otherwise-independent models for individual participants. Model selection via the marginal likelihood is one of the few methods that naturally accounts for model flexibility.

Despite its theoretical advantages, the marginal likelihood has seen limited uptake in practice for all but the simplest of psychological models, owing to its prohibitive computational cost. Although many psychologically interesting models have analytic, numerical or rapidly-simulated solutions for the responses of individual participants (e.g., the density of observed choices and response times of individual participants), the likelihood for hierarchical versions of those same models can be analytically intractable because they contain an integral over the individual random effects. This has proved to be the stumbling block in previous attempts to compute the marginal likelihood for hierarchical implementations of psychological models.

Our article develops a method for Bayesian model comparison when the likelihood (of a hierarchical model) is intractable but can be estimated unbiasedly. We show that when the density of individual observations conditioned on group- and individual-level parameters is available, an importance sampling algorithm can provide an unbiased estimate of the marginal likelihood of hierarchical models. We term this approach importance sampling squared ($\textrm{IS}^2$). The $\textrm{IS}^2$ method can be applied to samples from the posterior distribution over a model's parameters obtained following any sampling method. Section \ref{Appendix:IS2-effect-on-N} of the Appendix studies the convergence properties of the $\textrm{IS}^{2}$ estimator and provides practical guidelines for optimally selecting the number of particles required to estimate the likelihood.

We show how the $\textrm{IS}^2$ method can be used to estimate the marginal likelihood in hierarchical models of health decision making and choice response time models. In both applications, the method estimates the marginal likelihood in a principled way, providing conclusions that are consistent with the current literature. In all cases, the marginal likelihood is estimated with very small Monte-Carlo standard errors. To aid researchers in using the IS$^2$ method in their own research, we provide scripts implementing the key elements of the method, as applied to data from \citet{Forstmann2008} and \citet{Wagenmakers2008} -- see \url{osf.io/xv59c} for more details.

The data applications highlight two important properties of the $\textrm{IS}^2$ method: it is an efficient and unbiased estimator of the marginal likelihood, and it provides a standard error of the estimator. The marginal likelihood is a key quantity in appropriately accounting for model flexibility in quantitative model comparison. The standard error of an estimator is essential for interpreting any model comparison metric, not just marginal likelihoods, as it expresses the variability of the estimated metric. This is equally important in quantitatively comparing psychological models as it is in the interpretation of conventional statistics, despite being routinely overlooked. No researcher would conclude the population means of two groups differ solely on the basis of a difference in their sampled group means -- we would demand a measure that expresses the magnitude of the mean difference as a function of its variability, like a $t$-test. The $\textrm{IS}^2$ method provides the researcher with the tools to do precisely this with potentially complex and non-linear psychological models: it estimates the magnitude of differences (marginal likelihoods for different models) as a function of the variability of those estimates (the standard errors).

We emphasize that the $\textrm{IS}^2$ method is general and can be used to estimate the marginal likelihood and its standard error from models where the density of individual observations conditioned on group-level parameters and a vector of individual-level parameters is available. This is quite a general scenario -- reaching beyond the models studied here, and decision-making models more generally -- that applies to a very large category of psychological models for which hierarchical Bayesian parameter estimation has been used to date. This generality holds great promise for $\textrm{IS}^2$ as a vehicle for performing model comparison via the marginal likelihood in psychological research.

We also note that the IS$^2$ method is robust, by which we mean that the MCMC draws used to form the proposals
do not have  to converge to draws from the exact posterior as they are used in IS$^2$ to form importance sampling proposals.
Hence, as long as they are roughly in the same region as the posterior, the marginal likelihood estimates will be simulation consistent.
The same remarks hold if the MCMC targets a slightly perturbed posterior.

For faster computation, we conjecture that future work could speed up the estimation of the marginal likelihood as follows:
a) First, run the MCMC procedure in parallel on $K$ chains
without worrying about the precise convergence of each chain. In practice this means we can run each chain for far fewer iterates
than would be required for carrying out Bayesian inference if the inference was based only on the output of each chain.
b) Second,  form a proposal density based on the output of all the chains. c) Third, use IS to get $K$ robust and unbiased estimates of the marginal likelihood from each of $K$ processors. d) Finally,
average the $K$ estimates to get a final unbiased estimate of the marginal likelihood whose variance is $1/K$th the variance of each individual estimator.
The robustness and unbiasedness properties of the IS$^2$ estimator are crucial in ensuring that bias does not dominate the
variance as $K$ becomes large.

We note in Section~\ref{Appendix:IS2-method} of the Appendix that posterior expectations of functions of the parameters
can similarly be estimated with IS$^2$, albeit with some very minor bias.
We conjecture that such posterior moment estimators will be far more efficient than those obtained from standard MCMC output,
but leave to future work such speeding up of the computations of the marginal likelihood and posterior moments.

Finally, we note that the marginal likelihood is sometimes sensitive to the prior, and this might lead to some controversy in using marginal likelihood for model comparison. Addressing this issue is beyond the scope of this paper, and we focus only on how to estimate the marginal likelihood efficiently when the likelihood function is intractable.

\newpage





\pagebreak
\renewcommand{\thealgorithm}{A\arabic{algorithm}}
\renewcommand{\theremark}{A\arabic{remark}}
\renewcommand{\theequation}{A\arabic{equation}}
\renewcommand{\thetheorem}{A\arabic{theorem}}
\renewcommand{\thesection}{A\arabic{section}}
\renewcommand{\thepage}{A\arabic{page}}
\renewcommand{\thetable}{A\arabic{table}}
\renewcommand{\thefigure}{A\arabic{figure}}
\renewcommand{\theassumption}{A\arabic{assumption}}
\renewcommand{\theproposition}{A\arabic{proposition}}
\renewcommand{\thelemma}{A\arabic{lemma}}

\setcounter{page}{1}
\setcounter{section}{0}
\setcounter{equation}{0}
\setcounter{algorithm}{0}
\setcounter{table}{0}
\setcounter{figure}{0}
\setcounter{theorem}{0}
\section*{Appendix}
\section{Using the $\textrm{IS}^{2}$ Method to Obtain Posterior Expectations}\label{Appendix:IS2-method}

The motivation for using $\IS^2$ so far in this article was to develop an efficient and robust estimator for the marginal likelihood and the standard error of the estimator. This section shows that it is also straightforward to use $\IS^2$ to robustly estimate expectations with respect to the posterior distribution, as well as the standard errors of these estimators. We also discuss the convergence properties of these estimators.

Using the same notation as in Section~\ref{Sec:marginal likelihood in hierarchical models}, the posterior expectation of the function $\varphi$ of $ \theta$ is
\begin{align}\label{e:integral}
\E_\pi(\varphi) & =\int_\Theta\varphi(\t)\pi(\t)d\t =  \frac{\int_\Theta\varphi(\t)p(y|\theta)p(\theta) d\theta }
{\int_\Theta p(y|\theta)p(\theta) d\theta }.
\end{align}

When the likelihood is intractable, but can be estimated unbiasedly, we can use IS$^2$ to estimate unbiasedly and robustly
both the numerator and denominator on the right side of Equation \eqref{e:integral}, similarly to Section~\ref{Sec:IS2algorithm}, to  obtain
\beq\label{e:standardISestimator}
{\wh \E}_\pi(\varphi) = \frac{\frac1{M}\sum_{i=1}^{M}\varphi(\t_i)\wt w(\t_i)}{\frac1{M}\sum_{i=1}^{M}\wt w(\t_i)},\;\;\text{with weights}\;\;\wt w(\t_i)=\frac{p(\t_i)\wh p_N(y|\t_i)}{g_\IS(\t_i)},\;\;\t_i\stackrel{iid}{\sim}g_\IS(\t).
\eeq
In Equation \eqref{e:standardISestimator}, $\wh p_N(y|\t)$ is the unbiased estimate of $p(y|\theta)$, with $N$ the number of samples or particles used
to estimate the likelihood. It is clear that under mild conditions both the numerator and denominator of \eqref{e:standardISestimator} will converge
to the numerator and denominator respectively of \eqref{e:integral}  as $M \rightarrow \infty$, and hence ${\wh \E}_\pi(\varphi)$ converges
to $\E_\pi(\varphi)$. It is also clear that under very mild conditions the numerator of \eqref{e:standardISestimator} becomes normally distributed as
$M \rightarrow \infty$, which then means that ${\wh \E}_\pi(\varphi)$  also converges to normality. These issues are discussed more rigorously below.

We note that while both the numerator and denominator in \eqref{e:integral} are estimated unbiasedly, ${\wh \E}_\pi(\varphi)$ is a biased
estimator of $\E_\pi(\varphi)$, although the bias goes to 0 as $M \rightarrow \infty$.

In practice, the variances of both the numerator and denominator in \eqref{e:standardISestimator}, as well as their covariance, can be estimated
by the bootstrap and hence both the variance and bias of ${\wh \E}_\pi(\varphi)$ can be estimated.

\subsection{Some technical results\label{SS: technical results Sec 6}}
We now give some large sample (in $M$) convergence results for ${\wh \E}_\pi(\varphi)$ assuming that,
\begin{assumption}\label{ass: exp}  
 $\E[\wh p_N(y|\t)]=p(y|\t)$ for every $\theta\in\Theta$, where the expectation is with respect to the random variables generated
in the process of estimating the likelihood.
\end{assumption}
It is useful in the rest of this section and Section~\ref{Proofstheorem} to
follow \cite{Pitt:2012} and write $\wh p_N(y|\t)$ as $p(y|\t)e^z$,
where $z:=\log\;\wh p_N(y|\t)-\log\;p(y|\t)$ is a scalar random variable
whose distribution conditional on $\t$ is governed by the randomness due to estimating the likelihood $p(y|\t)$. Thus, the scalar $z$ 
replaces the multivariate $u$ in Section~\ref{Sec:IS2algorithm}. 
Let $g_N(z|\t)$ be the density of $z$ conditional on $\t$. Assumption~\ref{ass: exp} implies that
$\E(e^z|\theta)=\int_\mathbb{R} e^zg_N(z|\t)dz=1$.

\begin{theorem}\label{the:IS1}   
Suppose that Assumption \ref{ass: exp} holds, $\E_\pi(|\varphi|)$ exists and is finite,
and $\Sup(\pi)\subseteq\Sup(g_\text{IS})$, where $\Sup(\pi)$ denotes the support of the distribution $\pi$.
\begin{itemize}
\item[(i)] For any $N\geq 1$,  ${\wh \E}_\pi(\varphi) \stackrel{a.s.}{\longrightarrow}\E_\pi(\varphi)$ as $M\to\infty$.
\item[(ii)] If
\begin{align}\label{eq: finite var}
\int h(\theta)^2 \left (  \frac{\pi(\theta) } { g_{\IS}(\theta) } \right ) ^2  \Big ( \int  \exp(2z)g_N(z|\theta)dz \Big )  g_{\IS}(\theta)  d\theta
\end{align}
is finite for $h(\theta) =  \varphi(\theta) $ and $h(\theta) = 1 $ for all $N$, then
\beqn
\sqrt{M}\Big({\wh \E}_\pi(\varphi)-\E_\pi(\varphi)\Big)\stackrel{d}{\to}\N\big(0,\sigma^2_{\IS^2}(\varphi)\big),\; as \;M\to\infty,
\eeqn
where the asymptotic variance in $M$ for fixed $N$ is given by
\beq\label{e:ISvar_est_llh}
\sigma^2_{\IS^2}(\varphi)=\E_\pi\left\{\big(\varphi(\t)-\E_\pi(\varphi)\big)^2\frac{\pi(\t)}{g_\IS(\t)}\E_{g_N}[\exp(2z)]\right\}.
\eeq
\item[(iii)] Define
\beq\label{e:ISvar_est1}
\widehat\sigma^2_{\IS^2}(\varphi):=\frac{M\sum_{i=1}^{M}\big(\varphi(\t_i)-{\wh \E}_\pi(\varphi)\big)^2\wt w(\t_i)^2}{\left(\sum_{i=1}^{M}\wt w(\t_i)\right)^2}.
\eeq
If the conditions in (ii) hold, then $\widehat\sigma^2_{\IS^2}(\varphi)\stackrel{a.s.}{\longrightarrow}\sigma^2_{\IS^2}(\varphi)$ as $M\to\infty$, for given $N$. The proof is in Section \ref{Proofstheorem}.
\end{itemize}
\end{theorem}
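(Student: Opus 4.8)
The plan is to treat ${\wh \E}_\pi(\varphi)$ as a ratio of two ordinary importance-sampling averages over the iid pairs $(\t_i,z_i)$, where $\t_i\sim g_\IS$ and, conditional on $\t_i$, $z_i\sim g_N(\cdot\mid\t_i)$, exploiting the reparametrization $\wh p_N(y\mid\t)=p(y\mid\t)e^z$ introduced before the theorem. Writing the weight as $\wt w(\t_i)=\frac{p(\t_i)p(y\mid\t_i)}{g_\IS(\t_i)}e^{z_i}=\frac{p(y)\pi(\t_i)}{g_\IS(\t_i)}e^{z_i}$ (using $p(\t)p(y\mid\t)=p(y)\pi(\t)$), the numerator and denominator of \eqref{e:standardISestimator} are sample means of the iid variables $a_i:=\varphi(\t_i)\wt w(\t_i)$ and $b_i:=\wt w(\t_i)$, so everything reduces to applying limit theorems to $\bar a$ and $\bar b$.

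For (i) I would apply the strong law of large numbers to each mean separately. Using Assumption~\ref{ass: exp} in the form $\E(e^z\mid\t)=1$ together with $\Sup(\pi)\subseteq\Sup(g_\IS)$, a short calculation gives $\E(b_i)=\int p(\t)p(y\mid\t)\,d\t=p(y)$ and $\E(a_i)=\int\varphi(\t)p(\t)p(y\mid\t)\,d\t=p(y)\E_\pi(\varphi)$, the latter being finite because $\E_\pi(|\varphi|)<\infty$. Hence the numerator converges a.s.\ to $p(y)\E_\pi(\varphi)$ and the denominator to $p(y)>0$, and since the map $(a,b)\mapsto a/b$ is continuous at the limit point, the continuous-mapping theorem gives ${\wh \E}_\pi(\varphi)\stackrel{a.s.}{\longrightarrow}\E_\pi(\varphi)$.

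For (ii) the idea is a bivariate CLT followed by the delta method. Since $\wt w(\t)^2=\frac{p(y)^2\pi(\t)^2}{g_\IS(\t)^2}e^{2z}$ and $\int e^{2z}g_N(z\mid\t)\,dz=\E_{g_N}[e^{2z}\mid\t]$, condition \eqref{eq: finite var} with $h=\varphi$ and $h=1$ is, up to the finite factor $p(y)^2$, exactly the statement that $\E(a_i^2)$ and $\E(b_i^2)$ are finite; the multivariate CLT then applies to $\sqrt M\big((\bar a,\bar b)-(p(y)\E_\pi(\varphi),p(y))\big)$. Applying the delta method to $(a,b)\mapsto a/b$ yields asymptotic variance $p(y)^{-2}\Var\!\big(\wt w(\t)(\varphi(\t)-\E_\pi(\varphi))\big)$; because $\E\big[\wt w(\t)(\varphi(\t)-\E_\pi(\varphi))\big]=\E(a_i)-\E_\pi(\varphi)\E(b_i)=0$, this collapses to $p(y)^{-2}\E\big[\wt w(\t)^2(\varphi(\t)-\E_\pi(\varphi))^2\big]$. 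Substituting the expression for $\wt w(\t)^2$, integrating out $z$, and cancelling $p(y)^2$ recovers \eqref{e:ISvar_est_llh} precisely.

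For (iii) I would write $\widehat\sigma^2_{\IS^2}(\varphi)$ from \eqref{e:ISvar_est1} as $\frac{M^{-1}\sum_i(\varphi(\t_i)-{\wh \E}_\pi(\varphi))^2\wt w(\t_i)^2}{(M^{-1}\sum_i\wt w(\t_i))^2}$ and expand the square about the \emph{random} centre ${\wh \E}_\pi(\varphi)$, so the numerator becomes $M^{-1}\sum_i\varphi(\t_i)^2\wt w(\t_i)^2 - 2{\wh \E}_\pi(\varphi)\,M^{-1}\sum_i\varphi(\t_i)\wt w(\t_i)^2 + {\wh \E}_\pi(\varphi)^2\,M^{-1}\sum_i\wt w(\t_i)^2$. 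Each average converges a.s.\ by the SLLN, the required integrability of $\varphi^2\wt w^2$, $|\varphi|\wt w^2$ and $\wt w^2$ following from \eqref{eq: finite var} together with Cauchy--Schwarz for the mixed term, while ${\wh \E}_\pi(\varphi)\to\E_\pi(\varphi)$ by (i); hence the numerator tends to $\E\big[(\varphi(\t)-\E_\pi(\varphi))^2\wt w(\t)^2\big]=p(y)^2\sigma^2_{\IS^2}(\varphi)$ and the denominator to $p(y)^2$, giving the claim. The main obstacle I anticipate is bookkeeping rather than conceptual: correctly identifying \eqref{eq: finite var} as the exact second-moment condition the CLT and delta method require, and carrying the $e^{2z}$ factor through the variance computation so that it collapses to $\E_{g_N}[\exp(2z)]$; the random centering in (iii) needs the small extra expansion above so that the estimated mean may be replaced by its limit inside the quadratic.
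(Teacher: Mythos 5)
Your proposal is correct and follows essentially the same route as the paper: the strong law applied separately to the numerator and denominator for (i), a central limit theorem for the ratio in (ii), and the strong law again for (iii); your bivariate-CLT-plus-delta-method phrasing of (ii) collapses, exactly as you note via $\E[\wt w(\t)(\varphi(\t)-\E_\pi(\varphi))]=0$, to the paper's argument, which centers the numerator first and then invokes a univariate CLT together with Slutsky's theorem. Your treatment of (iii) is in fact slightly more careful than the paper's, which asserts the almost sure convergence directly without spelling out the expansion around the random centre ${\wh \E}_\pi(\varphi)$ or the Cauchy--Schwarz bound for the cross term.
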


Although both $\sigma^2_{\IS^2}(\varphi)$ and $\widehat\sigma^2_{\IS^2}(\varphi)$ depend on $N$,
we do not show this  dependence explicitly to simplify the notation.
Here, all the probabilistic statements, such as the almost sure convergence,
must be understood on the extended probability space that takes into account the extra randomness occurring when estimating the likelihood.
 When the likelihood can be computed, the analogous results are well known in the literature \cite[see, e.g., p.1318,][]{Geweke:1989}.


\section{The Effect on Importance Sampling of Estimating the Likelihood}\label{Appendix:IS2-effect-on-N}

The results in the previous section show that
it is straightforward to use importance sampling
even when the likelihood is intractable but unbiasedly estimated.
This section addresses the  question of how much asymptotic efficiency is lost
when working with an estimated likelihood.
We follow \cite{Pitt:2012} and \cite{Doucet2015} and
make the following idealized assumption to make it possible to develop some intuition and practical guidelines for selecting $N$. Section \ref{Proofstheorem} contains all the proofs unless stated otherwise.

\begin{assumption} \label{ass: normal}  
\begin{enumerate}
\item[(i)]
There exists a function $\gamma^2(\t)$ such that the density $g_N(z|\t)$ of $z$ is $\N(-\frac{\gamma^2(\t)}{2N},\frac{\gamma^2(\t)}{N})$, where $\N(a,b^2)$ is a univariate normal
density with mean $a$ and variance $b^2$.
\item[(ii)]
 For a given $\sigma^2>0$, define $N_{\s^2}(\t):=\gamma^2(\t)/\s^2$. Then, $\Var(z|\theta, N =N_{\s^2}(\t) )\equiv\s^2$ for all $\theta \in \Theta$.
\end{enumerate}
\end{assumption}
If $g_N(z|\theta)$ is Gaussian, then, by Assumption \ref{ass: exp}, its mean is $-\frac12 $ times its variance because $\E_{g_N}(\exp(z))=1$.
Assumption~\ref{ass: normal}~(ii) keeps the variance $\Var(z|\theta, N)$ constant across different values of $\t$, thus
making it easy to associate the $\IS^2$ asymptotic variances with $\s$.
Under Assumption~\ref{ass: normal}, the density $g_N(z|\t)$ depends only on $\s$
and we write it as $g(z|\s)$.
\begin{lemma}\label{lem:conditions}  
If Assumption \ref{ass: normal} holds for a fixed $\s^2$, then Equation \eqref{eq: finite var} becomes
\beq\label{eq: stand IS conds}
\int h(\theta)^2 \left(\frac{\pi(\t)}{g_\IS(\t)}\right)^2 g_\IS(\t)d\t<\infty 
\eeq
for both $h = \varphi$ and $h = 1$.
\end{lemma}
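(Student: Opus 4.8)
The plan is to show that, under Assumption~\ref{ass: normal}, the inner integral in \eqref{eq: finite var}, namely $\int\exp(2z)\,g_N(z|\t)\,dz$, collapses to a constant that does not depend on $\t$; once this constant is factored out, \eqref{eq: finite var} becomes a strictly positive multiple of \eqref{eq: stand IS conds}, so the two finiteness conditions are equivalent.

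First I would use Assumption~\ref{ass: normal} to pin down the conditional law of $z$. Taking the particle number to be the $\t$-dependent value $N=N_{\s^2}(\t)=\g^2(\t)/\s^2$, part~(ii) forces $\Var(z\,|\,\t,N)=\s^2$ for every $\t$. Since $g_N$ is Gaussian and Assumption~\ref{ass: exp} gives $\E_{g_N}(e^z)=1$, the mean must equal minus one half the variance; hence $z\,|\,\t\sim\N(-\s^2/2,\s^2)=:g(z|\s)$, a density that no longer depends on $\t$.

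Second, I would evaluate the relevant Gaussian moment generating function. For $Z\sim\N(\mu,\tau^2)$ one has $\E(e^{tZ})=\exp\!\big(t\mu+\tfrac12 t^2\tau^2\big)$, so taking $t=2$, $\mu=-\s^2/2$ and $\tau^2=\s^2$ gives
\beqn
\int\exp(2z)\,g(z|\s)\,dz=\exp\!\big(-\s^2+2\s^2\big)=\exp(\s^2),
\eeqn
a finite, strictly positive constant depending only on the fixed $\s^2$.

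Finally, substituting this value into \eqref{eq: finite var} and pulling the constant $\exp(\s^2)$ outside the outer integral leaves
\beqn
\exp(\s^2)\int h(\t)^2\left(\frac{\pi(\t)}{g_\IS(\t)}\right)^2 g_\IS(\t)\,d\t ,
\eeqn
which is finite if and only if \eqref{eq: stand IS conds} is finite, since $0<\exp(\s^2)<\infty$; the argument is identical for $h=\varphi$ and for $h=1$, establishing the lemma. The computation is elementary, so there is no real analytic obstacle; the only points requiring care are reading off the correct Gaussian parameters jointly from Assumptions~\ref{ass: exp} and~\ref{ass: normal}, and observing that it is precisely the choice $N=N_{\s^2}(\t)$ that renders the inner expectation free of $\t$ and hence factorable, rather than a $\t$-dependent weight that would have to remain inside the integral.
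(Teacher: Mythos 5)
Your proof is correct and is exactly the ``straightforward'' argument the paper omits: under Assumption~\ref{ass: normal} with $N=N_{\s^2}(\t)$ the conditional law of $z$ is $\N(-\s^2/2,\s^2)$ independently of $\t$, so $\int\exp(2z)g_N(z|\t)\,dz=\exp(\s^2)$ factors out of Equation~\eqref{eq: finite var}, leaving a positive constant times Equation~\eqref{eq: stand IS conds}. This is the same computation the paper itself invokes in the proof of Theorem~\ref{the:ISefficency}, so there is nothing to add.
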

These are the standard conditions for IS \citep{Geweke:1989}.
The proof of this lemma is straightforward and omitted.

Recall that $\sigma^2_\IS(\varphi)$ and $\sigma^2_{\IS^2}(\varphi)$ are
respectively the asymptotic variances of the IS estimators
we would obtain when the likelihood is available
and when it is estimated.
We refer to the ratio $\sigma^2_{\IS^2}(\varphi)/\sigma^2_\IS(\varphi)$ as the inflation factor,
which measures how much the asymptotic variance is inflated when working with an estimated likelihood.
Theorem~\ref{the:ISefficency} obtains an expression for the inflation factor, shows that it is independent of $\varphi$, greater than or equal to 1 and increases exponentially with $\sigma^2$, which is the variance of $z$.

\begin{theorem}\label{the:ISefficency}  
Under Assumption~\ref{ass: normal} and the conditions in Theorem \ref{the:IS1},
\beq\label{e:IF_IS}
\frac{\sigma^2_{\IS^2}(\varphi)}{\sigma^2_\IS(\varphi)}=\exp(\sigma^2).
 \eeq
\end{theorem}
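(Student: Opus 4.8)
The plan is to begin from the explicit asymptotic variance \eqref{e:ISvar_est_llh} supplied by Theorem~\ref{the:IS1} and to recognize $\sigma^2_\IS(\varphi)$ as the corresponding ordinary importance sampling variance obtained when the likelihood is evaluated exactly, namely \eqref{e:ISvar_est_llh} with the factor $\E_{g_N}[\exp(2z)]$ deleted (equivalently $z\equiv 0$). Thus
\beq
\sigma^2_\IS(\varphi)=\E_\pi\left\{\big(\varphi(\t)-\E_\pi(\varphi)\big)^2\frac{\pi(\t)}{g_\IS(\t)}\right\}.
\eeq
The whole problem then reduces to evaluating the single extra factor $\E_{g_N}[\exp(2z)]$ under Assumption~\ref{ass: normal} and showing it is a constant free of $\t$, so that it can be pulled outside the $\E_\pi$ and the ratio $\sigma^2_{\IS^2}(\varphi)/\sigma^2_\IS(\varphi)$ collapses to that constant.

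Next I would evaluate this factor. By Assumption~\ref{ass: normal}~(i), conditional on $\t$ the variable $z$ is normal with variance $\tau^2:=\gamma^2(\t)/N$ and mean $-\tau^2/2$, the mean being exactly the value forced by the unbiasedness constraint $\E_{g_N}(\exp(z))=1$ from Assumption~\ref{ass: exp}. Applying the Gaussian moment generating function $\E[\exp(tz)]=\exp(t\mu+t^2\tau^2/2)$ at $t=2$, $\mu=-\tau^2/2$ gives
\beq
\E_{g_N}[\exp(2z)]=\exp(2\mu+2\tau^2)=\exp(-\tau^2+2\tau^2)=\exp(\tau^2)=\exp\!\big(\gamma^2(\t)/N\big).
\eeq
Invoking Assumption~\ref{ass: normal}~(ii) at $N=N_{\s^2}(\t)=\gamma^2(\t)/\s^2$ forces $\tau^2=\gamma^2(\t)/N=\s^2$ for every $\t$, so $\E_{g_N}[\exp(2z)]=\exp(\s^2)$ is a single constant. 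Substituting into \eqref{e:ISvar_est_llh}, factoring out $\exp(\s^2)$, and dividing by $\sigma^2_\IS(\varphi)$ yields \eqref{e:IF_IS}.

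I expect no serious technical obstacle here: the core is a one-line moment generating function evaluation, and the finiteness needed for the variances to exist is already guaranteed by Lemma~\ref{lem:conditions} under the conditions of Theorem~\ref{the:IS1}. The only point demanding care is conceptual, namely that the clean, $\varphi$-independent answer hinges entirely on Assumption~\ref{ass: normal}~(ii) keeping $\Var(z\mid\t)$ constant across $\t$; without that constancy the factor $\exp(\gamma^2(\t)/N)$ would remain trapped inside the $\E_\pi$, the inflation factor would in general depend on $\varphi$, and it would not reduce to $\exp(\s^2)$. The same computation incidentally shows the inflation factor is $\ge 1$ and grows exponentially in $\s^2$, as the statement asserts.
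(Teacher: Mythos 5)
Your proposal is correct and follows essentially the same route as the paper's proof: evaluate $\E_{g_N(z|\t)}[\exp(2z)]=\exp(\s^2)$ via the Gaussian moment generating function under Assumption~\ref{ass: normal}, note that Assumption~\ref{ass: normal}~(ii) makes this factor constant in $\t$, and pull it out of the expression \eqref{e:ISvar_est_llh} to identify the remainder as $\sigma^2_\IS(\varphi)$. Your added remark on why the constancy of $\Var(z\mid\t)$ is essential is a useful clarification but does not change the argument.
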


\subsection{Optimal $N$ for estimating posterior expectations}\label{Subsec:tradeoffIS}
This section aims to explicitly take into account both the statistical precision
of the $\IS^2$ estimators and
their computational cost. It is apparent that there is  a trade-off
between these two considerations. A large value of $N$  results in a precise
estimator so that $\sigma^2$ is small and the relative variance $\Var(\wh\varphi_{\IS^2})/\Var(\wh\varphi_{\IS})={\sigma^2_{\IS^2}(\varphi)}/{\sigma^2_\IS(\varphi)}=\exp(\sigma^2)$ given by
Equation \eqref{e:IF_IS} is close to 1. However, the cost of such an estimator will be large due
to the large number of particles, $N$, required. Conversely, a small value of $N$ 
results in an estimator that is cheap to evaluate but has a large value
of $\sigma^2$ and hence the variance of the IS$^2$ estimator relative to the IS estimator will be large. To explicitly
trade-off these considerations, we introduce the computational time $\CT(\sigma^2)$
 which is a product of the relative variance and the computational effort.
Minimising  $\CT(\sigma^2)$ results in an optimal value for
$\sigma^2$ and hence $N$.

Under Assumption~\ref{ass: normal}, $N=N_{\sigma^{2}}(\theta)=\gamma^{2}(\theta
)/\sigma^{2}$, so that the expected number of particles, over the draws of $\theta$,
is  ${\ov N}=\overline{{\gamma}^{2}}/\sigma^{2}$, where
$\overline {{\gamma}^{2}}:=\mathbb{E}_{g_{IS}}[\gamma^{2}(\theta)]$. This motivates the assumption that the computational cost is proportional to $1/\sigma^2$.
From Theorems \ref{the:IS1} and \ref{the:ISefficency},
the variance of the estimator $\wh\varphi_{\IS^2}$ based on $M$
importance samples from $g_\IS(\t)$
is approximated by
\beq\label{eq:Var_varphi}
\Var(\wh\varphi_{\IS^2})\approx\frac{\sigma^2_{\IS^2}(\varphi)}{M} = \frac{\sigma^2_{\IS}(\varphi)}{M}\exp(\sigma^2).
\eeq
If $\kappa^*$ is the target precision of the estimator $\wh\varphi_{\IS^2}$, then the required number of samples is $M={\sigma^2_{\IS}(\varphi)e^{\sigma^2}}/{\kappa^*}$,
and the required computational cost is proportional to
 \[\frac{\sigma^2_{\IS}(\varphi)e^{\sigma^2}}{\kappa^*}\times \frac{1}{\sigma^2}=\frac{\sigma^2_{\IS}(\varphi)}{\kappa^*}\times \frac{e^{\sigma^2}}{\sigma^2}.\]
Therefore, it makes sense to define the measure of the computing time of the IS$^2$ method (in order to obtain a given precision for the $\IS^2$ estimators)
as
\begin{align}\label{eq: CT def}
\CT(\sigma^{2}):=\frac{\exp(\sigma^{2})}{\s^2}.
\end{align}
It is straightforward to check that $\CT(\s^2)$ is minimized at
\beq\label{eq:sigma_opt}
\s^2_\text{opt}=1,
\eeq
and the optimal number of particles $N$ is such that
$\Var(z|\theta, N)=\Var(\log\;\wh p_{N}(y|\theta))=\s^2_\text{opt}$.

\subsection{Optimal $N$ for estimating the marginal likelihood}\label{SS: optimal N marg likel}
Section \ref{Subsec:tradeoffIS} derived the optimal $N$ for estimating integrals of the form in Equation \eqref{e:integral}.
We now show that the optimal $N$ is the same when the main goal is to estimate the marginal likelihood.

The $\IS^2$ marginal likelihood estimator with an estimated likelihood
is $\widehat{p}_{\IS^2}(y) = \sum_{i=1}^M {\wt w}(\t_i)/M$, with the weights
${\wt w}(\t_i)$ from Equation \eqref{eq: IS2 weights}.
We noted that $
\mathbb{E}_{g_{\IS}}[\omega(\theta)]=\mathbb{E}_{\widetilde{g}_{\IS}%
}[\widetilde{\omega}(\theta)]=p(y),
$
where $\widetilde{\omega}(\theta)=e^{z}\omega(\theta)$. We again wish to
compare the variance of the $\IS^2$ estimator to that of the $\IS$ estimator. Under Assumption \ref{ass: normal} that $z$ is Gaussian and
independent of $\theta$,
\begin{align*}
\mathbb{V}_{\widetilde{g}_{\IS}}[\widetilde{\omega}(\theta)/p(y)]  &
=\mathbb{E}_{\widetilde{g}_{\IS}}[e^{2z}\omega^{2}(\theta)/p(y)^{2}%
]-1=e^{\sigma^{2}}\mathbb{E}_{_{g_{\IS}}}[\omega^{2}(\theta)/p(y)^{2}]-1\\
&  =e^{\sigma^{2}}\left(  \mathbb{V}_{_{g_{\IS}}}[\omega(\theta
)/p(y)]+1\right)  -1.
\end{align*}
Consequently, the relative variance of the two schemes is given by%
\begin{align} \label{eq: ratio var marg lik}
\frac{\mathbb{V}_{\widetilde{g}_{\IS}}[\widehat{p}_{\IS^{2}}(y)]}{\mathbb{V}%
_{_{g_{\IS}}}[\widehat{p}_{\IS}(y)]} & =\frac{\mathbb{V}_{\widetilde{g}_{\IS}%
}[\widetilde{\omega}(\theta)/p(y)]}{\mathbb{V}_{_{g_{\IS}}}[\omega%
(\theta)/p(y)]}=\frac{e^{\sigma^{2}}(v+1)-1}{v},
\end{align}
where $v=\mathbb{V}_{_{g_{\IS}}}[\omega(\theta)/p(y)]=\mathbb{V}_{_{g_{\IS}}}[\pi(\theta)/g_\IS(\theta)]$.
Following the argument of section~\ref{Subsec:tradeoffIS}, the corresponding
computing time can be defined as
\[
\CT_\text{ML}(\sigma^{2}):=\frac{e^{\sigma^{2}}(v+1)-1}{\sigma^2},
\]
where the subscript ML indicates this is for the marginal likelihood estimator.

Let $\s^2_{\min}(v)$ minimize $\CT_\text{ML}(\sigma^{2})$ for a given $v$.
The following proposition, whose proof is obvious and omitted, summarizes some properties of $\s^2_{\min}(v)$, where
$\s^2_\text{opt}$ below is given by Equation \eqref{eq:sigma_opt}.
\begin{proposition}\label{proposition:marg_lik}  
\begin{enumerate}
\item[(i)] For any value of $v$, $\CT_\text{ML}(\s^2)$ is a convex function of $\sigma^{2}$;
therefore $\s^2_{\min}(v)$ is unique.
\item[(ii)] $\s^2_{\min}(v)$ increases as $v$ increases and
$\s^2_{\min}(v)\longrightarrow\s^2_\text{opt}=1$ in \eqref{eq:sigma_opt} as $v\longrightarrow\infty$.
\end{enumerate}
\end{proposition}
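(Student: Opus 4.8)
The plan is to treat $\CT_\text{ML}$ as a function of the single scalar variable $x:=\sigma^2>0$, writing $f(x):=\CT_\text{ML}(x)=\big((v+1)e^x-1\big)/x$ and setting $a:=v+1$. Since $v$ is a variance we have $a\geq 1$. Both parts then reduce to elementary calculus facts about $f$ on $(0,\infty)$, so the work is entirely in organising a few derivative computations.

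For part (i), I would first compute the first two derivatives. A direct quotient-rule calculation gives
\[
f'(x)=\frac{a e^x(x-1)+1}{x^2},\qquad f''(x)=\frac{a e^x\big[(x-1)^2+1\big]-2}{x^3}.
\]
Convexity amounts to showing the numerator of $f''$ is positive for $x>0$. The clean route is to study the auxiliary function $h(x):=e^x\big[(x-1)^2+1\big]$: one checks $h(0)=2$ and, after simplifying, $h'(x)=e^x x^2\geq 0$, so $h$ is increasing and $h(x)>2$ for every $x>0$. Since $a\geq 1$, this yields $a\,h(x)-2>0$, hence $f''(x)>0$ and $f$ is strictly convex. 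Strict convexity gives at most one stationary point; to upgrade this to existence and uniqueness of a minimiser I would invoke the boundary behaviour: for $v>0$ the numerator $(v+1)e^x-1\to v>0$ while $x\to 0^+$ forces $f(x)\to+\infty$, and plainly $f(x)\to+\infty$ as $x\to\infty$, so $f$ attains its infimum at a unique interior point $\s^2_{\min}(v)\in(0,\infty)$.

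For part (ii), I would work directly with the first-order condition $f'(x)=0$, which from the expression for $f'$ reads $(v+1)e^x(1-x)=1$. Because the left side must equal the positive number $1$ and $e^x>0$, any solution satisfies $x<1$, so $\s^2_{\min}(v)\in(0,1)$. Introducing $G(x):=(1-x)e^x$, the condition becomes $G\big(\s^2_{\min}(v)\big)=1/(v+1)$. A short computation gives $G'(x)=-x e^x<0$ on $(0,1)$, together with $G(0)=1$ and $G(1)=0$, so $G$ is a strictly decreasing bijection of $(0,1)$ onto $(0,1)$ with a continuous, strictly decreasing inverse. Writing $\s^2_{\min}(v)=G^{-1}\big(1/(v+1)\big)$ and noting that $1/(v+1)$ is strictly decreasing in $v$, the strict increase of $\s^2_{\min}(v)$ in $v$ follows at once as the composition of two decreasing maps; letting $v\to\infty$ sends $1/(v+1)\to 0$, and since $G(x)\to 0$ only as $x\to 1$, we obtain $\s^2_{\min}(v)\to 1=\s^2_\text{opt}$.

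The computations are routine, so the only mild obstacle is arranging the sign analysis cleanly: the numerator of $f''$ is not term-by-term positive (near $x=0$ it degenerates to $0$ when $a=1$), and the slick identity $h'(x)=e^x x^2$ is what makes its positivity transparent. The degenerate case $v=0$ deserves a remark, since then $f(x)\to 1$ as $x\to 0^+$ and the infimum is pushed to the boundary rather than attained in the interior; this is why the statement is naturally read for $v>0$, with the $v\to\infty$ limit in (ii) recovering $\s^2_\text{opt}=1$.
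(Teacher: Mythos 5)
Your proof is correct, and it is worth noting that the paper itself offers no argument here: Proposition \ref{proposition:marg_lik} is stated with the remark that its proof is ``obvious and omitted,'' so yours is the only written-out derivation to judge. Your route is the natural one and it is executed cleanly. For (i), the identity $\frac{d}{dx}\bigl(e^x[(x-1)^2+1]\bigr)=e^x x^2$ is exactly the observation needed to make the sign of $f''$ transparent, since the numerator $a\,e^x[(x-1)^2+1]-2$ is not termwise positive; combined with $h(0)=2$ and $a=v+1\ge 1$ this gives strict convexity, and the boundary behaviour $f(x)\to\infty$ at both ends of $(0,\infty)$ upgrades ``at most one stationary point'' to existence of a unique interior minimiser. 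For (ii), rewriting the first-order condition as $G\bigl(\s^2_{\min}(v)\bigr)=1/(v+1)$ with $G(x)=(1-x)e^x$ strictly decreasing from $1$ to $0$ on $(0,1)$ delivers, in one stroke, that $\s^2_{\min}(v)\in(0,1)$, that it is strictly increasing in $v$, and that it tends to $\s^2_\text{opt}=1$ as $v\to\infty$; this is consistent with the numerical values in Table \ref{tab:marg_lik} (e.g.\ $(1-0.77)e^{0.77}\approx 0.5=1/(v+1)$ for $v=1$). Your remark on the degenerate case $v=0$, where the infimum migrates to the boundary $x\to 0^+$, is a legitimate caveat that the paper's statement glosses over. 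No gaps.
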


Table \ref{tab:marg_lik} illustrates some of the results of Proposition \ref{proposition:marg_lik} and shows
$\s^2_{\min}(v)$ and the ratio $\CT_\text{ML}(\sigma^{2}_{\min})/\CT_\text{ML}(\sigma^{2}_\text{opt})$
for some common values of $v$ used in practice. The
table shows that the values of $\sigma^{2}_{\min}$ and the ratio are insensitive to $v$, and
$\s^2_{\min}(v)\longrightarrow\s^2_\text{opt}=1$ as $v$
increases. From these observations, we advocate using $\s^2_\text{opt}=1$ as the optimal value of
the variance of the log likelihood estimates in estimating the marginal likelihood.
\begin{table}[h]
\centering%
\caption{Sensitivity of the computing time to $v=\Var_{g_\IS}[\pi(\theta)/g_\IS(\theta)]$
in estimating the marginal likelihood.}
\label{tab:marg_lik}%
\begin{tabular}
[c]{ccc}\hline
$v$ & $\sigma_{\min}^2(v)$ & $\CT_\text{ML}(\s^2_{\min})/\CT_\text{ML}(\s^2_\text{opt})$\\
\hline
1	&0.77	&0.97\\
5	&0.93	&0.99\\
10	&0.97	&1.00\\
100	&1.00	&1.00\\    	
$\infty$&1.00	&1.00\\
\hline
\end{tabular}
\end{table}


\section{Estimation Details for the Two Applications}\label{Appendix:estimation-details}

\subsection{Individual differences in health-related decisions}

We estimate the likelihood (Equation \eqref{eq:gmnllik} of the main text) by integrating out the random effects vector for each individual separately using different approaches for the mixed logit and GMNL models. For the mixed logit model, we combine the efficient importance sampling (EIS) method of \cite{ZR2007} with the defensive sampling approach of \cite{Hesterberg1995}. The importance density is the two component defensive mixture
\[h(\alpha_j|y_{j1},\ldots,y_{jT})=\pi h^\text{EIS}(\alpha_j|y_{j1},\ldots,y_{jT})+(1-\pi)p(\alpha_j),\]
where $h^\text{EIS}(\alpha_j|y_{j1},\ldots,y_{jT})$ is a multivariate Gaussian importance density obtained using EIS. Following
 \cite{Hesterberg1995}, including the natural sampler $p(\alpha_j)$ in the mixture ensures that the importance weights are bounded. We set the mixture weight as $\pi=0.5$. For the GMNL model, we follow \cite{fklw2010} and use the model density $p(\alpha_j)$ as an importance sampler. We implement this simpler approach for the GMNL model because the occurrence of large values of $\lambda_j$ causes the defensive mixture estimates of the log-likelihood to be right skewed in this case.

The repeated questioning of each individual (i.e., $T=32$ choice scenarios) implies that the log-likelihood estimates are sums of independent estimates $\log\hspace{1mm}  \widehat{p}(y_j|\theta)$ for each individual. To target a certain variance $\sigma^2$ for the log-likelihood estimator, we chose the number of particles for each individual ($N_j$) and parameter combination $\theta$ such that $\Var(\log\hspace{1mm} \widehat{p}(y_j|\theta))\approx \sigma^2/S$.  We implemented this scheme by using a fixed number of initial importance samples and then used the jackknife method to estimate $\gamma^2_j(\theta)$, the asymptotic variance of $\log \hspace{1mm} \widehat{p}(y_j|\theta)$, and selected $N_j(\theta)=\widehat{\gamma^2_j}(\theta)S/\sigma^2$. The preliminary number of particles is $N=20$ for the mixed logit model and $N=2,500$ for the GMNL model.

To construct efficient and reliable proposal densities for the parameters $g_\IS(\t)$, we used the ``Mixture of $t$ by Importance Sampling Weighted Expectation Maximization'' approach (MitISEM; \citealp{hoogerheide2012class}). MitISEM constructs a mixture of $t$ densities for approximating the target distribution by minimizing the Kullback--Leibler divergence between the target density and the $t$ mixture, and it can handle target distributions that have non-standard shapes such as multimodality and skewness; MitISEM effectively approximates the posterior of the two models as two component mixtures of multivariate Student's $t$ distributions.
We write $\widehat p_N(y|\t)=\widehat p_N(y|\t,u)$, with $u$ a fixed random number stream for all $\theta$.
The target distribution in the MitISEM is  $p(\t)\widehat p_N(y|\t,u)/p(y)$, which can be considered as the posterior $p(\theta|y,u)$ conditional on $y$ and the common random numbers $u$.
Our procedure is analogous to using common random numbers $u$ to obtain simulated maximum likelihood estimates of $\theta$ \citep[see, e.g.,][]{gourieroux1995statistics}, except that we obtain a histogram estimate of the ``posterior'' $p (\theta| u, y) \propto p(y|\theta, u ) p(u) $. This ``posterior''  is biased but is sufficient to obtain a good proposal density.

We implemented two standard variance reduction methods at each IS stage: stratified mixture sampling and antithetic sampling. The first consists of sampling from each component at the exact proportion of the mixture weights. For example, when estimating the likelihood for the mixed logit model we generate exactly $\pi N$ draws from the efficient importance density $h^\text{EIS}(\alpha_j|y_{j1},\ldots,y_{jT})$ and $(1-\pi) N$  draws from $p(\alpha_j)$. The antithetic sampling method consists of generating pairs of perfectly negatively correlated draws from each mixture component \citep[see, e.g.,][]{Ripley87}.

\subsection{The speed-accuracy tradeoff in perceptual decisions}\label{Appendix:proposalforalpha}

We used the Particle Metropolis within Gibbs (PMwG) sampler of \citet{Gunawan2019} to obtain $10,000$ posterior draws of ${\theta}$ and ${\alpha}_{1:S}$ for each of the four models described in the main text. We fitted a mixture of normal distributions to the samples from the posterior of $\theta$ to obtain the proposal density for the group-level parameters 
\begin{equation}
g_{\IS}\left({\theta}\right)=\sum_{k=1}^{K}w_{k}^{MIX}\phi\left({\theta}|{\mu}_{k},{\Sigma}_{k}\right),
\end{equation}
where $\phi(\mu,\Sigma)$ denotes the multivariate normal density function with mean ${\mu}$ and variance-covariance matrix ${\Sigma}$, and the $w_{k}^{MIX}$ are the component weights.
For practical purposes, we select the number of components $K$ using the Bayesian Information Criterion (BIC), and estimate the mixture of normals via Matlab's built-in function $\texttt{fitgmdist}$.

Proposal densities for the random effects ($m_{j}\left({\alpha}_{j}|{\theta},{y}_{j}\right)$) were constructed participant-by-participant. For the $j^{th}$ participant, we first fitted a normal distribution to the samples of $\left({\alpha}_{j},{\theta}\right)$, then derived the conditional distribution $g\left({\alpha}_{j}|{\theta}\right)\sim N\left({\alpha}_{j};{\mu}_{j,prop},{\Sigma}_{j,prop}\right)$ for $j=1,..,S$.
The proposal density for subject $j$ is the two-component defensive mixture
\begin{equation}
m_{j}\left({\alpha}_{j}|{\theta},{y}_{j}\right)=w_{\alpha}^{MIX}N\left({\alpha}_{j};{\mu}_{j,prop},{\Sigma}_{j,prop}\right)+\left(1-w_{\alpha}^{MIX}\right)p\left({\alpha}_{j}|{\theta}\right)\label{eq:proposalrandeffect}
\end{equation}
with $p\left({\alpha}_{j}|{\theta}\right)$ the prior density of $\alpha_j$.
The inclusion of the prior $p\left({\alpha}_{j}|{\theta}\right)$ ensures that the importance weights in Equation \eqref{eq: importance weights} are bounded \citep{Hesterberg1995}. 
 We set the mixture weight in Equation \eqref{eq:proposalrandeffect} to $w_{\alpha}^{MIX}=0.95$.


\section{Additional Applications of the Hierarchical LBA to Data}\label{Appendix:additional-LBA-applications}

This Appendix applies the hierarchical LBA model discussed in Section \ref{subsec:Forstmann} to two additional data sets.

\subsection{The speed-accuracy tradeoff in lexical decisions\label{subsec:Speed/Accuracy-Data}}

This section extends the analysis of the speed-accuracy tradeoff covered in the main text to judgments about the identity of strings of letters. Experiment 1 of \citet{Wagenmakers2008} had $17$ participants perform a basic lexical decision task that involved repeated decisions regarding whether letter strings were valid words or not (`non-words'). Prior to each block of trials, participants were given instructions to respond as quickly as possible (condition 1: speed emphasis) or as accurately as possible (condition 2: accuracy emphasis), where the instruction emphasis alternated between blocks. Concurrent to the speed-accuracy manipulation, word frequency was manipulated across four levels: words of very low frequency (vlf), low frequency (lf) and high frequency (hf), and non-words (nw). Participants completed 20 blocks with 96 trials per block for a total of 1920 lexical decisions per participant. See \citet{Wagenmakers2008} for all remaining details.

\citet{Rae2014} re-analysed \citet{Wagenmakers2008}'s data to test whether instructions manipulating the speed-accuracy tradeoff only cause changes in decision caution (response threshold parameters) or decision caution \textit{and} the speed of information processing (drift rate parameters). Through a large-scale maximum likelihood-based parameter estimation exercise, \citet{Rae2014} found evidence that emphasising the speed of decisions caused participants to lower their response threshold \textit{and} increase their drift rates for correct and incorrect responses. The latter result suggests that time pressure increased the overall drive to respond (an overall increase in both correct and incorrect drift rates) while decreasing accuracy.
We reassessed these findings, through the lens of the marginal likelihood, comparing five LBA models. For clarity in the following, we refer to instructions that emphasise accuracy and speed as $a$ and $s$, respectively, and the correct and error accumulators as $c$ and $e$, respectively.
\begin{description}
\item[Model I] assumes there are no behavioural differences across all conditions (i.e., a null model), corresponding to a single set of LBA parameters over conditions. The vector of random effects for subject $j$ is
\[
\alpha_{j}=\left(\alpha_{b_{j}},\alpha_{A_{j}},\alpha_{v_{j}^{\left(e\right)}},\alpha_{v_{j}^{\left(c\right)}},\alpha_{\tau_{j}}\right).
\]

\item[Model II] assumes that decision caution differs as a function of the task instructions (speed vs accuracy), thus allowing two response threshold parameters,
\[
\alpha_{j}=\left(\alpha_{b_{j}^{\left(s\right)}},\alpha_{b_{j}^{\left(a\right)}},\alpha_{A_{j}},\alpha_{v_{j}^{\left(e\right)}},\alpha_{v_{j}^{\left(c\right)}},\alpha_{\tau_{j}}\right).
\]

\item[Model III] assumes the speed of information processing for the correct response, but not the incorrect response, differs as a function of task instructions,
\[
\alpha_{j}=\left(\alpha_{b_{j}},\alpha_{A_{j}},\alpha_{v_{j}^{\left(e\right)}},\alpha_{v_{j}^{\left(c,s\right)}},\alpha_{v_{j}^{\left(c,a\right)}},\alpha_{\tau_{j}}\right),
\]
where $v_{j}^{\left(c,s\right)}$ and $v_{j}^{\left(c,a\right)}$ are the mean drift rates for
the accumulators corresponding to the correct response in the speed- and accuracy-emphasis conditions, respectively.

\item[Model IV] extends Model III to also allow variation in the speed of information processing for the incorrect response across task instructions,
\[
\alpha_{j}=\left(\alpha_{b_{j}},\alpha_{A_{j}},\alpha_{v_{j}^{\left(e,s\right)}},\alpha_{v_{j}^{\left(e,a\right)}},\alpha_{v_{j}^{\left(c,s\right)}},\alpha_{v_{j}^{\left(c,a\right)}},\alpha_{\tau_{j}}\right),
\]
where $v_{j}^{\left(e,s\right)}$ and $v_{j}^{\left(e,a\right)}$ are the mean drift rates for
the incorrect accumulators in the speed- and accuracy-emphasis conditions, respectively.

\item[Model V] combines Models II and IV, thus allowing the response threshold and the drift rates for correct and incorrect responses to vary as a function of task instructions,
\[
\alpha_{j}=\left(\alpha_{b_{j}^{\left(s\right)}},\alpha_{b_{j}^{\left(a\right)}},\alpha_{A_{j}},\alpha_{v_{j}^{\left(e,s\right)}},\alpha_{v_{j}^{\left(e,a\right)}},\alpha_{v_{j}^{\left(c,s\right)}},\alpha_{v_{j}^{\left(c,a\right)}},\alpha_{\tau_{j}}\right).
\]
\end{description}



Table \ref{tab:The-estimates-of marginal likelihood experiment 1}
shows the log of the marginal likelihood estimates (with standard errors in parentheses) for the five models. The Monte Carlo standard errors for
the estimates are again small for all models, suggesting that the $\textrm{IS}^{2}$ method is very efficient. Unsurprisingly, Model I, the most rigid model that allows no parameter variation across conditions, has the smallest marginal likelihood estimate. Model II, which allows response thresholds to vary with task instruction, provides a better representation of the data than Model III, which only allows the mean correct drift rates to vary with task instruction. Interestingly, Model IV, which allows the correct and incorrect drift rates to vary across speed and accuracy instructions, performs better than a model that only allows the response thresholds to differ over those conditions (i.e., Model II). Overall, the results favour Model V, which allows the response thresholds and the
correct and incorrect drift rates to vary with task instructions. This is consistent with the general conclusions of \citet{Rae2014}. Relative to instructions emphasising decision accuracy, speed-emphasis instructions cause people to lower their response thresholds and increase their rates of accumulation for both the correct and incorrect response options.

\begin{table}[H]
\caption{Log of the marginal likelihoods for Experiment 1 of \citet{Wagenmakers2008}. Bootstrapped standard errors are in parentheses. Computation time is in minutes.\label{tab:The-estimates-of marginal likelihood experiment 1}}

\centering{}%
\begin{tabular}{cccccc}
\hline
Model & I & II & III & IV & V\tabularnewline
\hline
$\log\;\ \widehat{p}\left(y\right)$ & $\underset{\left(0.20\right)}{-200.53}$ & $\underset{\left(0.06\right)}{4959.32}$ & $\underset{\left(0.15\right)}{2364.53}$ & $\underset{\left(0.10\right)}{5413.50}$ & $\underset{\left(0.16\right)}{5719.70}$\tabularnewline
PMwG (in minutes) & 54.70 & 62.76 & 55.88 & 57.05 & 62.03\tabularnewline
$\textrm{IS}^{2}$ (in minutes) & 31.88 & 34.27 & 34.44 & 33.54 & 35.49\tabularnewline
Total CPU time (in minutes) & 86.58 & 97.03 & 90.32 & 90.59 & 97.52\tabularnewline
\hline
\end{tabular}
\end{table}

\subsection{Biasing lexical decisions \label{subsec:Words/Non-Words-Proportion-Data}}

Here we analyse decisions in the presence of response bias. A common method for inducing biased decisions is to manipulate the proportion of stimuli from different response categories; for example, presenting more stimuli from category 1 than category 2 will result in a greater proportion of responses in favour of category 1 over 2. The effect of this form of bias manipulation is most commonly explained as a decision process that is primed to give a response for the more common stimulus on the basis of less evidence than would be required to give a response for the less common stimulus \citep{voss2004interpreting}. The LBA model captures this effect of response-relevant information through differences in the threshold across response options: the more common stimulus has a \emph{lower} response threshold than the less common stimulus \citep{Brown2008}, thus requiring less evidence to trigger a response. This parameter change has been confirmed experimentally \citep{forstmann2010neural}.

Here, we investigate whether changing the proportion of words and non-words induces
response biases in lexical decisions, as reflected in the parameter estimates of the LBA model. Experiment 2 of \citet{Wagenmakers2008} had $19$ participants perform a lexical decision task where the proportion of word vs non-word stimuli alternated across blocks: in a `word' block, 75\% of the stimuli were words and 25\% were non-words; in a `non-word' block the proportions switched such that 75\% of the stimuli were non-words and 25\% were words. Each block had an approximately equal amount of high, low, and very low frequency words. Participants completed $20$ blocks with $96$ trials per block for a total of 1920 lexical decisions per participant. See \citet{Wagenmakers2008} for all other details.

We test whether the word vs non-word response proportion manipulation affected the response threshold parameters of the LBA model in the expected direction. We used the marginal likelihood to compare five LBA models. In addition to the notation introduced above, we refer to the stimulus manipulation of the proportion of words and non-words as $w$ and $nw$, where $w$ refers to the condition with 75\% words and 25\% non-words and vice versa for $nw$, and $W$ and $NW$ to refer to a response of word and non-word, respectively.
\begin{description}
\item[Model I] again assumes constancy in LBA parameters across conditions,
\[
\alpha_{j}=\left(\alpha_{b_{j}},\alpha_{A_{j}},\alpha_{v_{j}^{\left(e\right)}},\alpha_{v_{j}^{\left(c\right)}},\alpha_{\tau_{j}}\right).
\]
\item[Model II] assumes that the response threshold varies as a function of the proportion manipulation ($w$, $nw$) but does not differ for responses of words or non-words ($W$, $NW$),
\[
\alpha_{j}=\left(\alpha_{b_{j}^{\left(w\right)}},\alpha_{b_{j}^{\left(nw\right)}},\alpha_{A_{j}},\alpha_{v_{j}^{\left(e\right)}},\alpha_{v_{j}^{\left(c\right)}},\alpha_{\tau_{j}}\right).
\]

\item[Model III] assumes that the two responses (``word'', $W$; ``non-word'', $NW$) are governed by accumulators that had different response thresholds, and these thresholds are also free to vary across the proportion manipulation,
\[
\alpha_{j}=\left(\alpha_{b_{j}^{\left(w,W\right)}},\alpha_{b_{j}^{\left(w,NW\right)}},\alpha_{b_{j}^{\left(nw,W\right)}},\alpha_{b_{j}^{\left(nw,NW\right)}},\alpha_{A_{j}},\alpha_{v_{j}^{\left(e\right)}},\alpha_{v_{j}^{\left(c\right)}},\alpha_{\tau_{j}}\right),
\]

where $b_{j}^{\left(w,W\right)}$ and $b_{j}^{\left(w,NW\right)}$ refer to the threshold parameters for word and non-word responses in the 75\% word condition, respectively, and similarly for $b_{j}^{\left(nw,W\right)}$ and $b_{j}^{\left(nw,NW\right)}$ in the 75\% non-word condition.

\item[Model IV] has a parallel structure to Model III except that the proportion manipulation ($w$, $nw$) and response ($W$, $NW$) are assumed to selectively influence non-decision time rather than the response threshold,
\[
\alpha_{j}=\left(\alpha_{b_{j}},\alpha_{A_{j}},\alpha_{v_{j}^{\left(e\right)}},\alpha_{v_{j}^{\left(c\right)}},\alpha_{\tau_{j}^{\left(w,W\right)}},\alpha_{\tau_{j}^{\left(w,NW\right)}},\alpha_{\tau_{j}^{\left(nw,W\right)}},\alpha_{\tau_{j}^{\left(nw,NW\right)}}\right),
\]
where the random effects $\tau_{j}^{\left(w,W\right)}$, $\tau_{j}^{\left(w,NW\right)}$,
$\tau_{j}^{\left(nw,W\right)}$, and $\tau_{j}^{\left(nw,NW\right)}$ were
similarly defined as in the threshold parameters of Model III.

\item[Model V] assumes that the proportion manipulation influences the rate of evidence accumulation for correct and incorrect ``word'' and ``non-word'' responses, separately for each proportion condition,
\begin{align}
\alpha_{j}=\Big(\alpha_{b_{j}},\alpha_{A_{j}},& \alpha_{v_{j}^{\left(e,w,W\right)}},\alpha_{v_{j}^{\left(e,w,NW\right)}},\alpha_{v_{j}^{\left(e,nw,W\right)}},\alpha_{v_{j}^{\left(e,nw,NW\right)}}, \nonumber \\
& \alpha_{v_{j}^{\left(c,w,W\right)}},\alpha_{v_{j}^{\left(c,w,NW\right)}},\alpha_{v_{j}^{\left(c,nw,W\right)}},\alpha_{v_{j}^{\left(c,nw,NW\right)}},\alpha_{\tau_{j}} \Big), \nonumber
\end{align}
where $v_{j}^{\left(e,\ ,\ \right)}$ and $v_{j}^{\left(c,\ ,\ \right)}$ are the vectors of mean drift rates for incorrect and correct responses, respectively. The relevant pair of correct-incorrect drift rates for a given datum are contingent on the proportion condition of the trial (75\% words or non-words; $w$, $nw$) and the observed response (word or non-word; $W$, $NW$). Phrased differently, each datum (RT and response choice pair from a single trial) contains information to update 2 of the 8 drift rate parameters.

\end{description}

Table \ref{tab:The-estimates-of marginal likelihood experiment2} shows the log of the marginal likelihoods for the five models. As in the previous applications, the Monte Carlo standard errors
for the estimates are small for all models, although the more complex model V has slightly
higher standard errors; nevertheless, even this larger standard error is still much smaller than the differences in the log of the marginal likelihoods between models.

\begin{table}[H]
\caption{Log of the marginal likelihoods for Experiment 2 of \citet{Wagenmakers2008}. Bootstrapped standard errors are in parentheses. Computation time is in minutes.\label{tab:The-estimates-of marginal likelihood experiment2}}

\centering{}%
\begin{tabular}{cccccc}
\hline
Model & I & II & III & IV & V\tabularnewline
\hline
$\log\;\ \widehat{p}\left(y\right)$ & $\underset{\left(0.09\right)}{5769.33}$ & $\underset{\left(0.05\right)}{5876.11}$ & $\underset{\left(0.11\right)}{8449.08}$ & $\underset{\left(0.13\right)}{8259.95}$ & $\underset{\left(0.39\right)}{7501.00}$\tabularnewline
PMwG (in minutes) & 100.52 & 103.01 & 107.60 & 105.40 & 116.25\tabularnewline
$\textrm{IS}^{2}$ (in minutes) & 35.82 & 39.58 & 42.96 & 45.24 & 57.88\tabularnewline
Total CPU time (in minutes) & 136.34 & 142.59 & 150.56 & 150.64 & 174.13\tabularnewline
\hline
\end{tabular}
\end{table}

Models that do not allow any parameter variation (I) or a simple threshold change across proportion condition but not response type (II) provide a poor representation of the data; these models do not account for response bias, which suggests there were considerable bias effects present in the data. Assuming independent drift rates for each combination of correct/incorrect, response type and proportion condition (V) provided a much improved fit, though it was still inferior to models that assume a selective influence of the response threshold (III) or non-decision time (IV) on the response type $\times$ proportion condition interaction. The best account of the data is one in which the biased responses are assumed to arise from a change in the response thresholds (III). The parameter estimates of the preferred model suggest that the biased responses arose from lowered response thresholds for more probable responses.


\section{Proofs}\label{Proofstheorem}
Using the same notation as in Section~\ref{SS: technical results Sec 6}, we follow \cite{Pitt:2012}
and define the joint prior density of $\theta$ and $z$ as $p(\theta)\exp(z) g_N(z|\theta)$, so that the posterior density of $\theta$ and $z$ is
\begin{align} \label{eq: post theta and z}
\wt \pi (\theta, z) & = \frac{p(y|\theta) p(\theta)\exp(z) g_N(z|\theta)}{ p(y)}
= \frac{\wh p_N(y|\theta) p(\theta) g_N(z|\theta) }{p(y)} ,
\end{align}
and has $p(\theta|y)$ as a marginal. Hence,
\begin{align}\label{eq: marg lik  in z and theta}
p(y) & = \int_\Theta \int_Z \wh  p_N(y|\theta) p(\theta) g_N(z|\theta)  dz d\theta 
 = \int_\Theta \int_Z \wt w(\theta, z) \wt g_{IS} (\theta, z) dz d\theta\\
\intertext{where}
\wt w(\theta,z) & = \frac{\wh p_N(y|\theta) p(\theta) g_N(z|\theta)} { \wt g_{IS}(\theta,z)} \quad \text{and} \quad
\wt g_{IS}(\theta,z) = p(\theta) g_N(z|\theta)
\end{align}
Thus, we can  consider $\wt g_{IS} $ in Equation \eqref{eq: marg lik  in z and theta} as an importance density in $z$ and $\theta$, which will lead to
the same IS$^2$ estimate for the marginal likelihood as in Section~\ref{Sec:IS2algorithm}.

\begin{proof}[Proof of Theorem \ref{lem: robust}]
Unbiasedness holds because we are dealing with IS. 
The assumptions in Theorem \ref{the:IS1} ensure that $\wt w(\t_i)$'s are i.i.d with a finite second moment.
The results of (i) and (ii) then follow.
\end{proof}

\begin{proof}[Proof of Theorem \ref{the:IS1}]
If $\Sup(\pi)\subseteq\Sup(g_\text{IS})$ then $\Sup(\pi_N)\subseteq\Sup(\wt g_\text{IS})$.
This, together with the existence and finiteness of $\E_\pi(\varphi)$ ensures that $\E_{\wt g_\IS}[\varphi(\t)\wt w(\t,z)]=p(y)\E_\pi(\varphi)$ and $\E_{\wt g_\IS}[\wt w(\t,z)]=p(y)$ exist and are finite.
Result (i) follows from the strong law of large numbers.
To prove (ii), write
\beqn
\wh E_\pi(\varphi)-\E_\pi(\varphi) = \frac{\frac1M\sum_{i=1}^M\big(\varphi(\t_i)-\E_\pi(\varphi)\big)\wt w(\t_i,z_i)} {\frac1{M}\sum_{i=1}^{M}\wt w(\t_i,z_i)}.
\eeqn
Let $X_i=\big(\varphi(\t_i)-\E_\pi(\varphi)\big)\wt w(\t_i,z_i)$, $i=1,...,M$, $S_{M}=\frac{1}{M}\sum_{i=1}^{M} X_i$ and $Y_{M}=\frac1{M}\sum_{i=1}^{M}\wt w(\t_i,z_i)$.
The $X_i$ are independently and identically distributed with $\E_{\wt g_\IS}(X_i)=0$ and
\bean
\nu^2_N=\Var_{\wt g_\IS}(X_i)=\E_{\wt g_\IS}(X_i^2)&=&\int_{\wt\Theta}\int_{Z}\big(\varphi(\t)-\E_\pi(\varphi)\big)^2\wt w(\t,z)p(y)\pi_N(\t,z)d\t dz\\
&=&p(y)\E_{\pi_N}\left\{\big(\varphi(\t)-\E_\pi(\varphi)\big)^2\wt w(\t,z)\right\}<\infty.
\eean
By the central limit theorem for a sum of independently and identically distributed random variables
with a finite second moment, $\sqrt{M}S_{M}\stackrel{d}{\to}\N(0,\nu^2_N)$.
By the strong law of large numbers, $Y_{M}\stackrel{P}{\to}\E_{\wt g_\IS}(\wt w(\t,z))=p(y)$.
By Slutsky's theorem,
\beqn
\sqrt{M}\Big(\wh E_\pi(\varphi)-\E_\pi(\varphi)\Big)=\frac{\sqrt{M}S_{M}}{Y_{M}}\stackrel{d}{\to}\N(0,\nu^2_N/p(y)^2).
\eeqn
The asymptotic variance is given by
\bean
\wh E_\pi(\varphi)=\frac{\nu^2_N}{p(y)^2}&=&\frac{1}{p(y)}\E_{\pi_N}\left\{\big(\varphi(\t)-\E_\pi(\varphi)\big)^2\wt w(\t,z)\right\}\\
&=&\E_\pi\left\{\big(\varphi(\t)-\E_\pi(\varphi)\big)^2\frac{\pi(\t)}{g_\IS(\t)}\E_{g_N(z|\t)}[\exp(2z)]\right\}.
\eean
To prove (iii), write
\bean
\widehat\sigma^2_{\IS^2}(\varphi)&=&\frac{\frac{1}{M}\sum_{i=1}^{M}\big(\varphi(\t_i)-  \wh E_\pi(\varphi) \big)^2\wt w(\t_i,z_i)^2}{\left(\frac{1}{M}\sum_{i=1}^{M}\wt w(\t_i,z_i)\right)^2}\\
&\stackrel{a.s.}{\to}&\frac{\E_{\wt g_\IS}\left\{\big(\varphi(\t)-\E_\pi(\varphi)\big)^2\wt w(\t,z)^2\right\}}{\Big(\E_{\wt g_\IS}(\wt w(\t,z))\Big)^2}\\
&=&\sigma^2_{\IS^2}(\varphi).
\eean
\end{proof}

\begin{proof}[Proof of Theorem \ref{the:ISefficency}]
Under Assumption 2, $g_N(z|\t)=\N(-\s^2/2,\s^2)$ and
$\E_{g_N(z|\t)}[\exp(2z)]=\exp(\s^2)$.
From Equations \eqref{e:ISvar_est_llh} and \eqref{e:ISvar_est1},
\beqn
\sigma^2_{\IS^2}(\varphi)=\E_\pi\left\{\big(\varphi(\t)-\E_\pi(\varphi)\big)^2\frac{\pi(\t)}{g_\IS(\t)}\exp(\sigma^2)\right\}=\exp(\sigma^2)\sigma^2_\IS(\varphi).
\eeqn 
\end{proof}

\newpage

\section*{Open Practices Statement}

The two applications cover previously published data sets \citep{fklw2010,Forstmann2008}. Data and code are available at \url{osf.io/xv59c}.

\newpage

\bibliographystyle{apalike}
\bibliography{references_v1}

\end{document}